\documentclass[11pt,a4paper]{article}
\usepackage[latin1]{inputenc}
\usepackage[T1]{fontenc}
\usepackage{lmodern}
\usepackage[english]{babel}
\pagestyle{plain}\normalfont
\usepackage{amsmath}
\usepackage{amssymb}
\usepackage{amsthm}
\usepackage{amsfonts}
\usepackage{dsfont}

\usepackage{mathabx}
\usepackage{bbm}
\usepackage{bm}
\usepackage{mathrsfs}
\usepackage{pifont}
\usepackage{hyperref}
\usepackage{pgf}
\usepackage{graphicx}
\usepackage[inline]{enumitem}
\usepackage{tikz-cd}
\setlist{nosep, itemsep=.1cm, topsep=.1cm}

\newcommand{\be}[0]{\begin{equation}}
\newcommand{\ee}[0]{\end{equation}}

\newcommand*{\textcal}[1]{%
  % family qzc: Font TeX Gyre Chorus (package tgchorus)
  % family pzc: Font Zapf Chancery (package chancery)
  \textit{\large \fontfamily{pzc}\selectfont#1}%
}

\setlength{\textwidth}{15.8cm}
\setlength{\textheight}{21cm}
\setlength{\oddsidemargin}{0cm}
\setlength{\evensidemargin}{0cm}
\linespread{1.3}

\numberwithin{equation}{section}

\theoremstyle{plain}% default
\newtheorem{theorem}{Theorem}[section]
\newtheorem{lemma}[theorem]{Lemma}
\newtheorem{proposition}[theorem]{Proposition}

\theoremstyle{definition}

\begin{document}

\vspace*{-1cm}
\thispagestyle{empty}
\vspace*{1.5cm}

\begin{center}
{\Large 
{\bf Hamiltonian anomalies from extended field theories }}
\vspace{2.0cm}

{\large Samuel Monnier}
\vspace*{0.5cm}

Institut für Mathematik,
Universität Zürich,\\
Winterthurerstrasse 190, 8057 Zürich, Switzerland\\
samuel.monnier@gmail.com

\vspace*{1cm}

{\bf Abstract}
\end{center}

We develop a proposal by Freed to see anomalous field theories as relative field theories, namely field theories taking value in a field theory in one dimension higher, the anomaly field theory. We show that when the anomaly field theory is extended down to codimension 2, familiar facts about Hamiltonian anomalies can be naturally recovered, such as the fact that the anomalous symmetry group admits only a projective representation on the state space, or that the latter is really an abelian gerbe rather than an ordinary Hilbert space. We include in the discussion the case of non-invertible anomaly field theories, which is relevant to six-dimensional (2,0) superconformal theories. In this case, we show that the Hamiltonian anomaly is characterized by a degree 2 non-abelian group cohomology class, associated to the non-abelian gerbe playing the role of the state space of the anomalous theory. We construct Dai-Freed theories, governing the anomalies of chiral fermionic theories, and Wess-Zumino theories, governing the anomalies of Wess-Zumino terms, as extended field theories down to codimension 2.

\newpage

\tableofcontents

\section{Introduction and summary}

The Atiyah-Segal axioms \cite{Segal:1987sk, Atiyah1988} picture quantum field theories as functors between a bordism category and the category of Hilbert spaces. A $d$-dimensional quantum field theory indeed assigns a Hilbert space, its space of states, to a $d-1$-dimensional manifold, while path integration on $d$-dimensional bordisms, when such a description is available, provides a homomorphism between the Hilbert spaces associated to the boundaries. The locality of the quantum field theory ensures that this assignment is compatible with the gluing of bordisms. There is in particular a trivial field theory $\bm{1}$ that assigns $1 \in \mathbb{C}$ to any closed $d$-dimensional manifold, $\mathbb{C}$ to any closed $d-1$-dimensional manifolds and trivial homomorphisms to bordisms. 

The extension of these ideas to higher codimension manifolds is known to involve higher category theory \cite{Freed:1994ad, Baez1995, lurie-2009}; going down in dimension requires climbing the higher category hierarchy. For instance, as an extended field theory, the trivial theory $\bm{1}$ assigns the category of (finite dimensional) Hilbert spaces to closed $d-2$-dimensional manifolds. Another extension of these ideas is the notion of \emph{relative} field theory \cite{Freed:2012bs}. Given two $d$-dimensional field theories associated to the same bordism category, a $d-1$-dimensional relative field theory is a natural transformation between the two functors defining the field theories, after the underlying bordism category has been truncated to manifolds of dimension $d-1$ or lower. One can show that a relative theory between two copies of the trivial theory is equivalent to an ordinary field theory. The idea of a relative field theory has a long history, as a "$d-1$-dimensional field theory valued in a $d$-dimensional field theory". It has its roots in Witten's work on the Jones polynomial \cite{Witten:1988hf} and has been implicit in much of the literature on Chern-Simons theory and AdS singletons. 

In a recent paper \cite{Freed:2014iua}, Freed suggested that anomalous field theories should be seen as relative theories between a trivial field theory and a field theory that characterizes the anomaly, the \emph{anomaly field theory}. A similar proposal appeared in unpublished work by Moore \cite{Moore, Moore2012} and in the condensed matter literature \cite{Wen:2013oza, Kong:2014qka}. Further works exploring this idea include \cite{2014arXiv1406.7278F, 2014arXiv1409.5723F}. The first aim of the present paper is to show how many well-known properties of anomalous theories find a natural explanation when one pictures anomalous field theories as relative theories. For many known anomalous field theories, the anomaly field theory is \emph{invertible}, which means in particular that its partition function is non-vanishing and that its state space is one-dimensional. For this reason \cite{Freed:2014iua} focused on the invertible case. The second aim of the present paper is to understand the properties of anomalies associated to non-invertible anomaly field theories. We are mostly interested in the non-invertible case because it is relevant to the six-dimensional (2,0) superconformal field theories. But it is equally relevant to the case 2-dimensional chiral conformal field theories, as we explain briefly in Section \ref{SecExamAnomTh}.

Some properties of anomalous theories are easy to extract from their definition as relative theories. The anomalous theory $\textcal{f}$ is a natural transformation
\be
\label{EqAnThNatTrans}
\textcal{f}: \mathcal{A}|_{d-1} \rightarrow \bm{1}|_{d-1} \;, 
\ee
where $\bm{1}$ and $\mathcal{A}$ are the functors associated respectively to the trivial theory and to the anomaly theory. $|_{d-1}$ denotes the truncation of the bordism category to manifolds and bordism of dimension $d-1$ and lower. \eqref{EqAnThNatTrans} implies that on a $d-1$-dimensional manifold $M^{d-1}$, $\textcal{f}\:(M^{d-1})$ is a unitary complex vector space homomorphism $\mathcal{A}(M^{d-1}) \rightarrow \bm{1}(M^{d-1}) = \mathbb{C}$, hence can be seen as an element of the Hilbert space $\mathcal{A}(M^{d-1})^\dagger = \mathcal{A}(M^{d-1})$. If $\mathcal{A}$ is invertible, $\mathcal{A}(M^{d-1})$ is a Hermitian line and the partition function $\textcal{f}\:(M^{d-1})$ of the anomalous field theory is defined up to a phase. Over the moduli space of $d-1$-dimensional manifolds endowed with appropriate topological/geometric structure, the partition function becomes a section of a possibly non-trivial Hermitian line bundle \cite{Atiyah1984, Freed:1986zx}. If $\mathcal{A}$ is non-invertible, the partition function $\textcal{f}$ is a vector in the Hilbert space $\mathcal{A}(M^{d-1})$. Upon picking a non-canonical identification of $\mathcal{A}(M^{d-1})$ with $\mathbb{C}^n$, we get $n$ $\mathbb{C}$-valued partition functions from the components of this vector. This case is familiar from 2-dimensional rational chiral conformal field theories, which do not have a well-defined partition function, but multiple "conformal blocks", which play the role of partition functions. Over the moduli space, the conformal blocks can be seen as the components of a section of an $n$-dimensional vector bundle. In a completely analogous way, one can deduce that on a $d-2$-dimensional manifold $M^{d-2}$, $\textcal{f}\:(M^{d-2})$ is not quite a Hilbert space. In the invertible case, $\textcal{f}\:(M^{d-2})$ is an object in a category $\mathcal{A}(M^{d-2})^\dagger$ that is equivalent to the category $\mathcal{H}_1$ of Hilbert spaces, but non-canonically so. In the non-invertible case, $\textcal{f}\:(M^{d-2})$ is an object in a category non-canonically equivalent to $\mathcal{H}_1^n$, the $n$th Cartesian product of $\mathcal{H}_1$ with itself. Hence $\textcal{f}\:(M^{d-2})$ can be represented non-canonically as an $n$-component vector of Hilbert spaces. As strange as it may sound, this fact is actually well-known in the case of 2-dimensional rational chiral conformal field theories. The state space of the latter is in general composed of a collection of modules for the chiral vertex algebra underlying the theory. The simple vertex algebra modules can be seen as the generators of a free module category on $\mathcal{H}_1$ that is non-canonically equivalent to $\mathcal{H}_1^n$, where $n$ is the number of simple modules. Over the moduli space, this translates into the fact that the state space of the theory is a possibly non-trivial bundle gerbe \cite{Carey:1995wu, Carey:1997xm}.

However, in the physics literature, anomalies are most often described as the breaking of a classical symmetry under quantization, or more generally as a mild breaking of the invariance under the action of a symmetry group. This description may seem at first sight far removed from Freed's proposal. The key to relate these two points of view lies in the definition of the bordism category. The Atiyah-Segal picture is most often used for topological quantum field theories. In the corresponding bordism category, the unit morphisms are given by cylinders of the form $M^{d-1} \times [0,1]$, for $M^{d-1}$ a $d-1$-dimensional manifold. We are interested here in more general quantum field theories that depend on a set of geometric/topological structures $\mathsf{F}$ that can include a Riemannian metric. Such theories are functors whose domain is a geometric bordism category, composed of manifolds and bordisms carrying an $\mathsf{F}$-structure. In the geometrical realm, there are no bordisms that can play the role of the unit morphisms. Informally, say if the bordisms carry Riemannian metrics, the lengths of the cylinders cannot be ignored and the unit morphisms should be seen as infinitesimal cylinders 
\be
\label{EqLimBord}
\lim_{\epsilon \rightarrow 0} M^{d-1} \times [-\epsilon, \epsilon]
\ee 
of vanishing length. These infinitesimal bordisms must be added by hand in order to obtain a category structure on the set of bordisms \cite{Ayal}. Moreover, we can let the group of automorphisms of the $\mathsf{F}$-structure on $M^{d-1}$, which we write ${\rm Aut}_\mathsf{F}([M^{d-1}])$, act on one end of the cylinders \eqref{EqLimBord}. We also include in the geometric bordism category such infinitesimal bordisms. As a result, the group ${\rm Aut}_\mathsf{F}([M^{d-1}])$ is included in the bordism category, and should be thought of as the (potential) symmetry group of the field theories $\mathcal{A}$ and $\textcal{f}$ on $M^{d-1}$. Writing $\phi$ for the infinitesimal bordism corresponding to the element $\phi \in {\rm Aut}_\mathsf{F}([M^{d-1}])$, $\mathcal{A}(\phi)$ provides a representation of ${\rm Aut}_\mathsf{F}([M^{d-1}])$ on the Hilbert space $\mathcal{A}(M^{d-1})$, while $\bm{1}(\phi)$ corresponds to the trivial representation. As the anomalous field theory is a natural transformation $\textcal{f}: \mathcal{A}|_{d-1} \rightarrow \bm{1}|_{d-1}$, $\textcal{f}\:(M^{d-1})$ is a homomorphism from $\mathcal{A}(M^{d-1})$ to $\mathbb{C}$. Moreover, the definition of a natural transformation implies that $\textcal{f}\:(M^{d-1}) = \textcal{f}\:(\phi M^{d-1}) \circ \mathcal{A}(\phi)$, so $\textcal{f}\:(M^{d-1})$ transforms in the representation of ${\rm Aut}_\mathsf{F}([M^{d-1}])$ dual to the one defined by $\mathcal{A}(\phi)$. If the latter is non-trivial, the theory $\textcal{f}$ is not invariant and the symmetry ${\rm Aut}_\mathsf{F}([M^{d-1}])$ is anomalous.

%$\textcal{f}\:(\phi)$ is required to intertwine the two representations. If the representation associated to $\mathcal{A}$ is non-trivial, $\textcal{f}\:(\phi)$ acts non-trivially on $\textcal{f}\:(M^{d-1})$, and the theory $\textcal{f}$ is not invariant under the action of ${\rm Aut}_\mathsf{F}([M^{d-1}])$: the symmetry ${\rm Aut}_\mathsf{F}([M^{d-1}])$ is anomalous.

In the bulk of the paper, we will make precise the sketch of derivation above, in the case where the anomaly field theory $\mathcal{A}$ is extended down to codimension 2 and not necessarily invertible. On $M^{d-1}$, we find that the $n$-dimensional vector space $\mathcal{A}(M^{d-1})$ provides a representation of the anomalous symmetry group ${\rm Aut}_\mathsf{F}([M^{d-1}])$, that can be characterized by a group 1-cocycle of ${\rm Aut}_\mathsf{F}([M^{d-1}])$ valued in $U(n)$. The action of the symmetry group generally permutes the $n$ components of the partition function. This is well-known in the case of 2-dimensional rational chiral conformal field theories: in this case ${\rm Aut}_\mathsf{F}([M^{d-1}])$ contains the "modular transformations", i.e. the large diffeomorphisms of the underlying surface that preserve its conformal structure. Modular transformations generally mix the conformal blocks. In the invertible case, we recover the familiar fact that the partition function (which is unique up to a phase) transforms by multiplication by a character of the anomalous symmetry group ${\rm Aut}_\mathsf{F}([M^{d-1}])$. Going down to $M^{d-2}$, we find in the invertible case that the state space carries only a projective representation of the anomalous symmetry group, characterized by a degree 2 group cohomology class of ${\rm Aut}_\mathsf{F}([M^{d-2}])$ valued in the circle group $\mathbb{T}$. This situation was described by Faddeev \cite{Faddeev:1984jp} and Faddeev-Shatashvili \cite{Faddeev:1985iz} in the 80's. For infinitesimal symmetries, the corresponding degree 2 Lie algebra cocycle was described earlier by Mickelsson in \cite{Mickelsson1985}. Interestingly, the non-invertible case does not seem to have been described in the physics literature before. We find that the vector of Hilbert spaces playing the role of the state space carries something akin to a projective representation of the anomalous symmetry group, but whose projective factors are valued in $S_n \ltimes \mathbb{T}^n$, where $S_n$ is the symmetric group, acting on $\mathbb{T}^n$ by permutation. We show that such a representation naturally yields a degree 2 non-abelian group cocycle \cite{2006math.....11317B} of ${\rm Aut}_\mathsf{F}([M^{d-2}])$ with value in $S_n \ltimes \mathbb{T}^n$. We prove in Propositions \ref{ThHamAnCohomClass} and \ref{ThNonAbCohomClassCharAnom} that the symmetry is anomalous if and only if the corresponding non-abelian cohomology class is non-trivial. This provides a natural extension of the results of Mickelsson-Faddeev-Shatashvili to theories with non-invertible anomalies.

The rest of the paper is dedicated to the construction of invertible anomaly field theories describing well-known anomalies. We construct Wess-Zumino field theories that describe the anomalies of Wess-Zumino terms. Another large class of anomalous field theories are chiral fermions, whose anomalies are described by the Dai-Freed theory \cite{Dai:1994kq}. We extend the construction of Dai and Freed to codimension 2 and perform  some consistency checks, although we do not provide a full proof that the resulting extended field theory is really a 2-functor. The Dai-Freed theory assigns a category constructed from the index gerbe of the corresponding Dirac operator to $d-2$-dimensional manifolds, recovering known results \cite{Segala, Carey:1995wu, Carey:1997xm} about the Hamiltonian anomaly of chiral fermionic field theories.

Note that previous versions of the present paper claimed incorrectly that the anomaly field theories of self-dual fields were of the same type as the Wess-Zumino field theories. The anomaly field theories of uncharged self-dual fields are actually a certain type of Dai-Freed theories. Coupling the self-dual field to a background higher abelian gauge field adds a component to the anomaly field theory involving a Wu Chern-Simons theory \cite{Monnier:2016jlo}. The detailed construction will be presented elsewhere (but see also \cite{Monnier:2017klz, Monniere, MonnierMooreSum2018}).

The present paper is part of a program whose aim is to construct the anomaly field theories of six-dimensional (2,0) superconformal field theories, and in particular to characterize their Hamiltonian anomalies. The partition functions of these anomaly field theories were determined in \cite{Monnier:2014txa}.

The paper is organized as follows. In Section \ref{SecFieldTheo}, we recall the Atiyah-Segal picture of field theories and the notion of extended field theory, focusing on the codimension 2 case of interest to us. Section \ref{SecAnFieldThe} contains the definition of relative field theories and a derivation of some simple consequences. We discuss in detail the consequences of this definition on the symmetries of the theory in dimension $d-1$ and $d-2$. We show that in the non-invertible case, the Hamiltonian anomaly is characterized by a degree 2 non-abelian group cohomology class. Section \ref{SecWZTh} treats Wess-Zumino field theories and Section \ref{SecIndFieldTh} Dai-Freed theories. An appendix contains reviews of several concepts used in the main text. The nLab (\href{http://www.ncatlab.org}{\texttt{http://www.ncatlab.org}}) is a very useful reference for many of the higher categorical concepts appearing in the present paper.

\subsection{Notation}

Here is a brief overview of our notation.
\begin{itemize}
\item Categories, functors and natural transformations, as well as their higher analogues are denoted with calligraphic letters.
\item Objects in categories are denoted by ordinary capitals.
\item Given a 2-category $\mathcal{C}$, its category of morphisms between the objects $X$ and $Y$ is written $\textcal{Hom}_\mathcal{C}(X,Y)$, see Appendix \ref{App2cat2funct}.
\item $M^{d,p}$ is an oriented compact smooth manifold of dimension $d$ with corners down to codimension $p$, $M^d$ is a closed oriented compact smooth manifold of dimension $d$. We will often use this notation to avoid mentioning explicitly the dimension of the corresponding manifold.
\item The disjoint union of manifolds is written as a square cup $\sqcup$.
\item $\mathsf{F}$ denotes a set of geometric/topological structures required to define the quantum field theory of interest. We denote the $\mathsf{F}$-structure on a manifold $M$ by $\mathsf{F}(M)$ and call $M$ an $\mathsf{F}$-manifold. The category of $\mathsf{F}$-manifolds is written $\mathcal{M}_{\mathsf{F}}$.  See Appendix \ref{ApGeomBord2Cat} for further discussion about such structures.
\item In order to kill certain automorphism groups, we sometimes need structures that refine the $\mathsf{F}$-structures used to construct the quantum field theory. We denote those by $\mathsf{E}$. There is a category $\mathcal{M}_{\mathsf{E},\mathsf{F}}$ of manifolds with $\mathsf{E}$-structures whose morphisms only preserve the underlying $\mathsf{F}$-structures. See Appendix \ref{ApGeomBord2Cat}. Given a manifold $M \in \mathcal{M}_{\mathsf{E},\mathsf{F}}$, we will often write $[M]$ for the $\mathsf{F}$-manifold obtained by forgeting the extra data encoded by the $\mathsf{E}$-structure.
\item $\mathcal{B}^{d,p}_\mathsf{F}$ is the bordism $p$-category of $\mathsf{F}$-manifolds of dimension $d$ with $p$-codimensional corners. There is a corresponding bordism category $\mathcal{B}^{d,p}_{\mathsf{E},\mathsf{F}}$ based on the category $\mathcal{M}_{\mathsf{E},\mathsf{F}}$, see Appendix \ref{ApGeomBord2Cat}.
\item Its truncation to manifolds and bordism of dimension $d-1$ or lower is written $\mathcal{B}^{d,p}_\mathsf{F}|_{d-1}$. We use the same notation for the truncation of functors admitting the bordism category as domain, i.e. for field theory functors.
\item For a bordism $M^{d,1}$, we write $\partial_\pm M^{d,1}$ for the outgoing and incoming components of its boundary, so $M^{d,1}: \partial_- M^{d,1} \rightarrow \partial_+ M^{d,1}$.
\item $\mathcal{H}_n$ is the $n$-category of $n$-Hilbert spaces. We will be interested only in the case $n = 0,1,2$. $\mathcal{H}_0$ is the set $\mathbb{C}$. $\mathcal{H}_1$ is the category of finite dimensional Hilbert spaces. $\mathcal{H}_2$ is the 2-category of 2-Hilbert spaces defined in Appendix \ref{App2Vect}.
\item $\mathcal{T}_n$ are the higher circle groups, with $\mathcal{T}_0 = \mathbb{T}$ being the circle group $U(1)$. See Appendix \ref{AppCircGerbes}.
\item Chain, cochains, cycles and cocycles are represented with hats, the corresponding cohomology classes carry no hats. Differential cocycles carry a caron. See Appendix \ref{AppDiffCoc}.
\end{itemize}

\section{Field theories}

\label{SecFieldTheo}

In this section, we introduce some notation and sketch the picture of field theories as functors from a (higher) cobordism category to a (higher) category of (higher) Hilbert spaces. We stay concise, and we refer the reader to Section 1 of \cite{lurie-2009} for a more detailed exposition of extended field theories, in the case of topological field theories.

\subsection{The functorial picture of field theories}

A $d$-dimensional quantum field theory can be thought of as an assignment of a complex number, the partition function, to each $d$-dimensional manifold, and of a Hilbert space, the space of quantum states, to each $d-1$-dimensional manifold. A "manifold" should be understood here as a smooth orientable manifold endowed with all the extra structures required to define the quantum field theory of interest, e.g. a spin structure, a Riemannian or Lorentzian metric, and so on. We will denote this topological and/or geometrical structure by $\mathsf{F}$, and sometimes call a manifold endowed with an $\mathsf{F}$-structure an $\mathsf{F}$-manifold. In addition, path integration over manifolds with incoming and outgoing boundaries provides linear maps between the Hilbert spaces associated to the boundaries. These maps must be compatible with the gluing of manifolds along their boundaries.

The discussion above can be formalized using categorical concepts. There is a bordism category $\mathcal{B}^{d,1}_\mathsf{F}$ of $\mathsf{F}$-manifolds defined as follows (see Appendix \ref{ApGeomBord2Cat} for more details). The objects are $d-1$-dimensional $\mathsf{F}$-manifolds $M^{d-1}$ endowed with the germ of an $\mathsf{F}$-structure on $M^{d-1} \times \{0\} \subset M^{d-1} \times (-\epsilon, \epsilon)$, where $\epsilon > 0$. The morphisms between objects $M_-^{d-1}$ and $M_+^{d-1}$ in $\mathcal{B}^{d,1}_\mathsf{F}$ are $d$-dimensional $\mathsf{F}$-manifolds with boundary $-M_-^{d-1} \sqcup M_+^{d-1}$ extending the germ of $\mathsf{F}$-structure existing on the boundary. $-M_-^{d-1}$ denotes here $M_-^{d-1}$ with its opposite orientation. The composition of morphisms is given by gluing along the boundaries and the germs ensure that smooth $\mathsf{F}$-structures are obtained from the gluing of smooth $\mathsf{F}$-structures. $\mathcal{B}^{d,1}_\mathsf{F}$ admits a symmetric monoidal structure (i.e. a "commutative product") given by the disjoint union of manifolds. It also admits a $\dagger$-category structure, where the $\dagger$ operation is given by inverting the orientation of the bordisms.

%The bordism category described above is based on the category $\mathcal{M}_\mathsf{F}$ of $\mathsf{F}$-manifolds, in the sense that all manifolds and bordisms carry (germs of) $\mathsf{F}$-structures. In order to describe anomalous field theories, it will be userful to consider slightly more general bordism categories, based on the category of manifolds $\mathcal{M}_{\mathsf{E},\mathsf{F}}$ defined in Appendix \ref{ApGeomBord2Cat}. The manifolds in $\mathcal{M}_{\mathsf{E},\mathsf{F}}$ carry a structure $\mathsf{E}$ that refines the structure $\mathsf{F}$, in the sense that any $\mathsf{E}$-structure determines an $\mathsf{F}$-structure. The morphisms in $\mathcal{M}_{\mathsf{E},\mathsf{F}}$ preserve however only the $\mathsf{F}$-structures. The $\mathsf{E}$-structure should be seen as an additional label attached to manifolds with $\mathsf{F}$-structure, which reduces or kill their automorphism groups. Only the $\mathsf{F}$-structure is used to define the quantum field theory. We write $\mathcal{B}^{d,1}_{\mathsf{E},\mathsf{F}}$ for the bordism category constructed on $\mathcal{M}_{\mathsf{E},\mathsf{F}}$ and defined in Appendix \ref{ApGeomBord2Cat}.

There is a category $\mathcal{H}_1$ whose objects are finite dimensional Hilbert spaces and whose morphisms are homomorphisms. The tensor product provides as well a symmetric monoidal structure. $\mathcal{H}_1$ carries a $\dagger$-structure, given by the Hermitian conjugation of homomorphisms. In order to describe most field theories, one may rather want to consider a larger category consisting of infinite-dimensional Hilbert spaces or topological vector spaces and continuous homomorphisms, endowed with a completed tensor product (see for instance Lecture 3 of \cite{Segal}). We will see nevertheless in Section \ref{SecDef} that working with finite-dimensional vector spaces is sufficient to describe most anomalous field theories of physical interest, despite the fact that their state spaces are infinite-dimensional.

A (unitary) quantum field theory is then seen as a functor $\mathcal{F}: \mathcal{B}^{d,1}_{\mathsf{F}} \rightarrow \mathcal{V}_1$ compatible with the monoidal structures (i.e it is symmetric monoidal) and with the $\dagger$-structures. The requirement that $\mathcal{F}$ is a functor ensures that the assignment of homomorphisms of Hilbert spaces to manifolds with boundaries by the quantum field theory is compatible with gluing. The compatibility with the monoidal structure ensures that the partition function on disjoint unions of $d$-dimensional manifolds is the product of the partition functions associated to each connected component. Similarly, it ensures that the Hilbert space/homomorphism associated to disjoint unions of $d-1$-dimensional manifolds/$d$-dimensional bordisms is the tensor product of the Hilbert spaces/homomorphisms associated to the connected components. 
The compatibility with the $\dagger$-structure essentially implements the CPT theorem, known to hold for all unitary quantum field theories.

As any manifold can be seen as the disjoint union of itself and the empty manifold, the compatibility with the monoidal structure requires that $\mathcal{F}(\emptyset^d) = 1$, $\mathcal{F}(\emptyset^{d-1}) = \mathbb{C}$, where we respectively considered the empty set as a $d$-dimensional manifold and as a $d-1$-dimensional manifold. This fact also explains why we can see the field theory as associating a complex number to a closed $d$-dimensional manifold $M^d$. $M^d$ should really be seen as a bordism between $\emptyset^{d-1}$ and itself, which corresponds to a homomorphism $\mathcal{F}(M^d): \mathbb{C} \rightarrow \mathbb{C}$. But the space of such homomorphisms can be canonically identified with $\mathbb{C}$.

\subsection{Extended field theories}

The locality of quantum field theory suggests that one should be able to reconstruct the theory on any manifold $M$ from the knowledge of the theory on elementary building blocks, for instance simplexes. To do this, we must extend the cobordism category to include manifolds of dimension $d' \leq d$ with corners of arbitrary codimension, and understand the kind of objects that the functor $\mathcal{F}$ associates to them. Such quantum field theories are called \emph{fully extended}. It is also well-known that quantum field theories of physical interest often contain, in addition to point-like operators, defect operators of all codimensions. A proper description of such operators would probably also require formulating the theory as a fully extended field theory. We refer the reader to \cite{lurie-2009} for an account of fully extended topological field theories.

Less ambitiously, one may fix some $q < d$ and consider extended theories involving manifolds of dimension $d'$, $q \leq d' \leq d$, with $q$-dimensional corners. As we will see, anomalous field theories in $d-1$ dimensions are related to anomaly field theories in $d$ dimensions, so in order to understand the effect of anomalies on the state spaces of the anomalous theory on $d-2$-dimensional manifolds, we must consider extended anomaly field theories with $q = d-2$.

The definition of the extended bordism category $\mathcal{B}^{d,2}_{\mathsf{F}}$ can be found in Appendix \ref{ApGeomBord2Cat}. In summary, $\mathcal{B}^{d,2}_{\mathsf{F}}$ is a strict 2-category with the following properties. An object in $\mathcal{B}^{d,2}_{\mathsf{F}}$ is a closed manifold $M^{d-2}$ endowed with a $d$-dimensional germ of $\mathsf{F}$-structure. A 1-morphism between $M_-^{d-2}$ and $M_+^{d-2}$ is an manifold $M^{d-1,1}$ with boundary $-M_-^{d-2} \sqcup M_+^{d-2}$ and endowed with a $d$-dimensional germ of $\mathsf{F}$-structure, which should be compatible with the germs existing on the boundary. Such 1-morphisms are called \emph{regular}. There are in addition \emph{limit} 1-morphism, corresponding to infinitesimal bordisms, which will be reviewed later. A 2-morphism between 1-morphisms $M^{d-1,1}_-, M^{d-1,1}_+: M_-^{d-2} \rightarrow M_+^{d-2}$ is a manifold $M^{d,2}$ with boundary is $-M^{d-1,1}_- \sqcup_{-M_-^{d-2} \sqcup M_+^{d-2}} M^{d-1,1}_+$ and corners $-M_-^{d-2} \sqcup M_+^{d-2}$, see \eqref{EqBound2-Morph}. %The necessity of including limit morphism comes from the fact that in a geometry bordism category, there is in general no regular bordism that can play the role of a unit morphism: for instance if the bordisms are endowed with Riemannian metrics, the unit morphisms should be infinitesimally thin cylinders, hence limit morphisms. 
A symmetric monoidal structure on $\mathcal{B}^{d,2}_{\mathsf{F}}$ is provided by the disjoint union, and a dagger structure is provided by the orientation reversal.

The target of a field theory extended down to codimension 2 is the 2-category $\mathcal{H}_2$ of 2-Hilbert spaces \cite{MR1278735, Freed:1994ad, 1996q.alg.....9018B}, a notion that we review in Appendix \ref{App2Vect}. In short, a complex vector space is a $\mathbb{C}$-module. Going up in the category hierarchy, the role of $\mathbb{C}$ is taken by the category $\mathcal{V}_1$ of vector spaces, which can be seen as a semiring under the operations of direct sum and tensor product. (In order to get a true ring with an invertible addition, we would need to consider virtual vector spaces, which we will not do.) A 2-vector space is therefore a $\mathbb{C}$-linear category that is also a finitely generated free module for the category of vector spaces, up to equivalence. The simplest 2-vector space, playing a role equivalent to $\mathbb{C}$ for complex vector spaces, is the category $\mathcal{V}_1$ of vector spaces itself. Morphisms of 2-vector spaces are provided by $\mathcal{V}_1$-linear functors, i.e. functors preserving the $\mathcal{V}_1$-module structure, and 2-morphisms are natural transformations. The 2-vector spaces form in this way a 2-category $\mathcal{V}_2$.  $\mathcal{V}_2$ can be endowed with a higher direct sum and higher tensor product operations, forming a semiring structure, with $\mathcal{V}_1$ being the unit for the higher tensor product operation. 

Passing to Hilbert spaces, we need a sesquilinear form valued in $\mathcal{H}_1$ on our 2-vector space. This sesquilinear form is played by the hom functor, so we need to restrict to 2-vector spaces that are enriched in $\mathcal{H}_1$, i.e. whose spaces of morphism between any two objects are Hilbert spaces. The requirement of sesquilinearity requires furthermore that the 2-Hilbert spaces be $H^\ast$-categories \cite{1996q.alg.....9018B}, which are essentially $\dagger$-categories whose dagger operation is compatible with the inner product on the spaces of morphisms, see Appendix \ref{App2Vect}. We obtain in this way a 2-category $\mathcal{H}_2$ of 2-Hilbert spaces. There is a monoidal structure given by the higher tensor product and a dagger structure. The role of $\mathbb{C}$ as the trivial Hilbert space in $\mathcal{H}_1$ is taken over by the trivial 2-Hilbert space $\mathcal{H}_1$ in $\mathcal{H}_2$. We refer the reader to Appendix \ref{App2Vect} for more detailed information.

A field theory with data $\mathsf{F}$ extended to codimension 2 is a 2-functor $\mathcal{F}: \mathcal{B}^{d,2}_{\mathsf{F}} \rightarrow \mathcal{H}_2$ compatible with the monoidal and dagger structures. The functorial property ensures consistency with the gluing of manifolds. The compatibility with the dagger structure implements the CPT theorem and the compatibility with the monoidal structure implements the multiplicative property of the field theory data on disjoint manifolds. As before, the latter puts constraints on the value of $\mathcal{F}$ on the empty set: $\mathcal{F}(\emptyset^d) = 1$, $\mathcal{F}(\emptyset^{d-1}) = \mathbb{C}$, $\mathcal{F}(\emptyset^{d-2}) = \mathcal{H}_1$. This also allows us to simplify our picture of the objects the field theory associates to closed manifolds. For instance a closed manifold $M^{d-1}$ should be seen as a bordism from $\emptyset^{d-2}$ to itself, so $\mathcal{F}$ should associate to it a functor $\mathcal{F}(M^{d-2}): \mathcal{H}_1 \rightarrow \mathcal{H}_1$ preserving the semiring structure on $\mathcal{H}_1$. But any such functor is of the form $\bullet \otimes H$ for some $H \in \mathcal{H}_1$ \cite{2008arXiv0812.4969B}, so we can naturally see $\mathcal{F}(M^{d-1})$ as a Hilbert space, the space of quantum states of the theory. Similarly, one can show that $\mathcal{F}(M^d)$ can be canonically identified with a complex number, the partition function of the theory on $M^d$.

We say that a field theory $\mathcal{F}$ is \emph{invertible} when $\mathcal{F}(M)$ is invertible for all $M$. More precisely, $\mathcal{F}(M^d)$ should be a non-zero complex number, which is obviously invertible with respect to the monoidal structure on $\mathbb{C}$, namely the complex multiplication. $\mathcal{F}(M^{d-1})$ should be a Hermitian line. Hermitian lines are indeed the invertible objects of $\mathcal{H}_1$ with respect to the monoidal structure given by the tensor product. For the same reason, $\mathcal{F}(M^{d-2})$ should be a 2-Hermitian line (see Appendix \ref{App2Vect}). $\mathcal{F}(M^{d,1})$ should be a vector space isomorphism. $\mathcal{F}(M^{d-1,1}): \mathcal{F}(\partial_- M^{d-1,1}) \rightarrow \mathcal{F}(\partial_+ M^{d-1,1})$ should be an invertible functor, in the sense that there is a functor $\mathcal{G}: \mathcal{F}(\partial_+ M^{d-1,1}) \rightarrow \mathcal{F}(\partial_- M^{d-1,1})$ such that the two compositions of $\mathcal{F}(M^{d-1,1})$ with $\mathcal{G}$ are the identity functors on $\mathcal{F}(\partial_- M^{d-1,1})$ and $\mathcal{F}(\partial_+ M^{d-1,1})$. Finally, for $M^{d-2}$ such that $\partial M^{d-2} = -N_- \cup N_+$ with $\partial N_{\pm} = N_+ \cap N_- = -Q_- \sqcup Q_+$, $\mathcal{F}(M^{d-2})$ should be a natural equivalence between the functors $\mathcal{F}(N_-)$ and $\mathcal{F}(N_+)$, which map $\mathcal{F}(Q_-)$ to $\mathcal{F}(Q_+)$.

A trivial example of an extended invertible field theory is the following. Consider the field theory $\bm{1}$ that associates
\begin{itemize}
\item $1$ to any $d$-dimensional manifold.
\item $\mathbb{C}$ to any $d-1$-dimensional manifold.
\item $\mathcal{H}_1$ to any $d-2$-dimensional manifold. 
\end{itemize}
One should interpret the statements above properly in order to reconstruct the corresponding functor. For instance, $\bm{1}(M^{d,1})$ is a homomorphism $\mathbb{C} \stackrel{\bullet \cdot 1}{\rightarrow} \mathbb{C}$, i.e. the identity homomorphism. $\bm{1}(M^{d-1,1})$ is the functor $\mathcal{I}: \mathcal{H}_1 \stackrel{\bullet \otimes \mathbb{C}}{\rightarrow} \mathcal{H}_1$, which is just the identity functor. $\bm{1}(M^{d,2})$ can be identified with the natural transformation $\mathcal{I} \stackrel{\bullet \cdot 1}{\rightarrow} \mathcal{I}$ between the identity functors, i.e. the identity natural transformation. We will see the use of the trivial field theory next.

\section{Anomalous field theories}

\label{SecAnFieldThe}

In this section, we explain how anomalies of quantum field theories can be pictured elegantly using the formalism of extended field theories \cite{Freed:2014iua, Moore, Moore2012}. We develop this formalism to include Hamiltonian anomalies. We also generalize it in order to accommodate non-invertible anomaly field theories, which is the case relevant to the six-dimensional (2,0) theories. We show that in this case, the Hamiltonian anomaly on a spacial slice $M^{d-2}$ is characterized by a non-abelian cohomology class of the automorphism group ${\rm Aut}_{\mathsf{F}}([M^{d-2}])$. More information about relative field theories and their relations to anomalies can be found in \cite{Freed:2012bs, Freed:2014iua, Moore, Moore2012, 2014arXiv1406.7278F, 2014arXiv1409.5723F}.

\subsection{Definition}

\label{SecDef}

We start with a technical remark. As will be obvious in the present section, the bordism category $\mathcal{B}^{d,2}_{\mathsf{F}}$, based on the category $\mathcal{M}_\mathsf{F}$ of $\mathsf{F}$-manifolds, is not appropriate to describe anomalous field theories on $\mathsf{F}$-manifolds. Rather, it will be userful to consider slightly more general bordism categories, based on the categories of manifolds $\mathcal{M}_{\mathsf{E},\mathsf{F}}$ defined in Appendix \ref{ApGeomBord2Cat}. The manifolds in $\mathcal{M}_{\mathsf{E},\mathsf{F}}$ carry a structure $\mathsf{E}$ that refines the structure $\mathsf{F}$, in the sense that any $\mathsf{E}$-structure determines an $\mathsf{F}$-structure. The morphisms in $\mathcal{M}_{\mathsf{E},\mathsf{F}}$ preserve however only the $\mathsf{F}$-structures. We will call such manifolds $(\mathsf{E},\mathsf{F})$-manifolds. The $\mathsf{E}$-structure should be seen as an additional label attached to manifolds with $\mathsf{F}$-structure, which reduces or kill their automorphism groups. Only the $\mathsf{F}$-structure is used to define the quantum field theories involved. We write $\mathcal{B}^{d,1}_{\mathsf{E},\mathsf{F}}$ for the bordism category constructed on $\mathcal{M}_{\mathsf{E},\mathsf{F}}$ and defined in Appendix \ref{ApGeomBord2Cat}. As explained there, any extended field theory on $\mathsf{F}$-manifolds, represented by a functor $\mathcal{F}: \mathcal{B}^{d,2}_{\mathsf{F}} \rightarrow \mathcal{H}_2$, gives rise canonically to a field theory on $(\mathsf{E},\mathsf{F})$-manifold by pulling back the functor $\mathcal{F}$ to a functor $\mathcal{F}': \mathcal{B}^{d,2}_{\mathsf{E},\mathsf{F}} \rightarrow \mathcal{H}_2$.

An interesting fact is that a non-extended $d-1$-dimensional field theory can be seen as a 2-natural transformation  $\textcal{f}: \bm{1}|_{d-1} \rightarrow \bm{1}|_{d-1}$. Here, $\bm{1}$ is seen as a $d$-dimensional extended field theory, and $\bm{1}|_{d-1}$ is the restriction of $\bm{1}$ to the bordism category truncated to manifolds and bordisms of dimension $d-1$ or lower, written $\mathcal{B}^{d,2}_{\mathsf{E},\mathsf{F}}|_{d-1}$. This truncation is discussed at the end of Appendix \ref{ApGeomBord2Cat}. A sketch of a definition of 2-natural transformations can be found in Appendix \ref{App2cat2funct}, see \cite{1998math.....10017L} for a full definition. To see what this means, remark that such a 2-natural transformation associates an element of $\textcal{Hom}_{\mathcal{H}_2}(\bm{1}(M^{d-2}), \bm{1}(M^{d-2})) = \textcal{Hom}_{\mathcal{H}_2}(\mathcal{H}_1, \mathcal{H}_1)$, i.e. an $\mathcal{H}_1$-linear functor $\mathcal{H}_1 \rightarrow \mathcal{H}_1$, to each object $M^{d-2}$ of $\mathcal{B}^{d,2}_{\mathsf{E},\mathsf{F}}$. But we saw that such functors can be represented as the tensor product with a Hilbert space $H \in \mathcal{H}_1$, namely the image of $\mathbb{C} \in \mathcal{H}_1$. $\textcal{f}$ therefore associates a Hilbert space to each closed $d-2$-dimensional manifold $M^{d-2}$. This Hilbert space can be identified with the state space $\mathcal{F}(M^{d-2})$ of a $d-1$-dimensional field theory $\mathcal{F}$. Moreover, $\textcal{f}$ takes a bordism $M^{d-1,1}$ with $\partial M^{d-1,1} = -\partial_- M^{d-1,1} \sqcup \partial_+ M^{d-1,1}$ to a morphism $\textcal{f}\:(M^{d-1,1})$ of the category $\textcal{Hom}_{\mathcal{H}_2}(\mathcal{H}_1, \mathcal{H}_1)$ between $\textcal{f}\:(\partial_- M^{d-1,1})$ and $\textcal{f}\:(\partial_+ M^{d-1,1})$, i.e. a natural transformation between the associated functors. This natural transformation can be pictured as a homomorphism between the corresponding Hilbert spaces $\mathcal{F}(\partial_- M^{d-1,1})$ and $\mathcal{F}(\partial_+ M^{d-1,1})$ \cite{2008arXiv0812.4969B}. The compatibility of $\textcal{f}$ with the gluing in $\mathcal{B}^{d,2}_{\mathsf{E},\mathsf{F}}$ implies that $\mathcal{F}: \mathcal{B}^{d-1,1}_{\mathsf{E},\mathsf{F}} \rightarrow \mathcal{H}_1$ is a functor. If $\textcal{f}$ is required to be compatible with the monoidal and dagger structures, then $\mathcal{F}$ is a (non-extended) field theory in $d-1$ dimensions. The truncation is required, because the compatibility of the 2-natural transformation $\textcal{f}$ with the $d$-dimensional bordisms would require the partition function $\textcal{f}(M^{d-1})$ to be a cobordism invariant, which is in general not the case for quantum field theories of interest.

This suggests that given an extended $d$-dimensional field theory $\mathcal{A}$, we might obtain an interesting generalization of a field theory by looking at 2-natural transformations of the form $\bm{1}|_{d-1} \rightarrow \mathcal{A}|_{d-1}$ or $\mathcal{A}|_{d-1} \rightarrow \bm{1}|_{d-1}$. In fact, the two possibilities are not really different, because, at least in the finite dimensional setting that we are considering here, one can always find a dual field theory $\mathcal{A}^\dagger$ such that there is an equivalence between the 2-natural transformations $\mathcal{A}|_{d-1} \rightarrow \bm{1}|_{d-1}$ and $\bm{1}|_{d-1} \rightarrow \mathcal{A}^\dagger|_{d-1}$. $\mathcal{A}^\dagger$ is obtained from $\mathcal{A}$ by postcomposing it with the dagger operation on $\mathcal{H}_2$. Let us therefore define an \emph{anomalous field theory} to be a 2-natural transformation $\textcal{f}: \mathcal{A}|_{d-1} \rightarrow \bm{1}|_{d-1}$ compatible with the monoidal and dagger structures, $\mathcal{A}$ being the \emph{anomaly field theory} of $\textcal{f}$. We will see momentarily how to recover the physical notion of an anomalous field theory from this definition. Up to the operation of taking the dual, this definition corresponds to what was defined as a relative field theory in \cite{Freed:2012bs}. In \cite{Freed:2014iua} anomalous field theories were defined as relative field theories with the extra requirement that the anomaly field theory $\mathcal{A}$ should be invertible. We find it suitable to broaden the definition of \cite{Freed:2014iua} in order to accommodate the chiral rational conformal field theories in two dimensions, or the six-dimensional (2,0) superconformal field theories.

\paragraph{Properties of anomalous theories} Let us try to understand the consequences of this definition for the field theory $\textcal{f}$. $\textcal{f}$ takes $M^{d-2}$ to an object of $\textcal{Hom}_{\mathcal{H}_2}(\mathcal{A}(M^{d-2}), \mathcal{H}_1)$, i.e. a $\mathcal{H}_1$-linear functor $\textcal{f}\:(M^{d-2}) : \mathcal{A}(M^{d-2}) \rightarrow \mathcal{H}_1$. Recall also that $\mathcal{A}(M^{d-2})$ is a 2-Hilbert space, and is therefore non-canonically equivalent to $\mathcal{H}_1^n$ \cite{2008arXiv0812.4969B}, the $n$th Cartesian product of the category of Hilbert spaces. As the functor preserves the $\mathcal{H}_1$-module structure, it is determined by its value on the $n$ copies of $\mathbb{C}$ generating $\mathcal{H}_1^n$ as a category module over $\mathcal{H}_1$. Let us write these generators $\mathbb{C}_i$, $i = 1,...,n$. Writing $H_i = \textcal{f}\:(\mathbb{C}_i)$, we get a collection of Hilbert spaces. The anomalous theory is therefore associated to a collection $\{H_i\}$ of Hilbert spaces that depends on a choice of equivalence $\mathcal{A}(M^{d-2}) \sim \mathcal{H}_1^n$. Let us stress that this equivalence can in general \emph{not} be chosen canonically. We will see shortly the consequences of this fact. Let us also mention that the vectors in the Hilbert spaces $H_i$ \emph{cannot} be pictured directly as states of the anomalous theory, because the equivalence used to picture $\mathcal{A}(M^{d-2})$ as $\mathcal{H}_1^n$ discards some information. In particular, the fact that the $H_i$ are all finite dimensional does not mean that we are restricting ourselves to anomalous field theories with finite dimensional state spaces. We discuss this point in more detail below.

Let us move up in dimension and consider a bordism $M^{d-1,1}$. $\textcal{f}\:(M^{d-1,1})$ is a morphism of the category $\textcal{Hom}_{\mathcal{H}_2}(\mathcal{A}(\partial_- M^{d-1,1}), \mathcal{H}_1)$ between the objects $\textcal{f}\:(\partial_- M^{d-1,1})$ and $\textcal{f}\:(\partial_+ M^{d-1,1}) \circ \mathcal{A}(M^{d-1,1})$. We will see in Sections \ref{SecHamAnInvCase} and \ref{SecHamAnGenCase} that this fact implies that the state space of the anomalous theory is a gerbe.

In the case of a closed $d-1$-dimensional manifold $M^{d-1}$, we have $\mathcal{A}(\emptyset) = \mathcal{H}_1$, $\textcal{f}\:(\emptyset)$ is the identity functor and the discussion above simplifies. We see that $\textcal{f}\:(M^{d-1})$ is now a natural transformation between the identity functor and $\mathcal{A}(M^{d-1})$. But the functor $\mathcal{A}(M^{d-1})$ can be pictured as a Hilbert space and the natural transformation is simply a homomorphism $\textcal{f}\:(M^{d-1}): \mathcal{A}(M^{d-1}) \rightarrow \mathbb{C}$. We see therefore that the field theory $\textcal{f}$ does not yield a complex number on closed $d-1$-dimensional manifolds. It does so only after one specifies an element of the Hilbert space $\mathcal{A}(M^{d-1})$. More precisely, the partition function is a vector in $\mathcal{A}(M^{d-1})^\dagger = \mathcal{A}(M^{d-1})$. The simplest case occurs when $\mathcal{A}$ is an invertible field theory. $\mathcal{A}(M^{d-1})$ is a Hermitian line and the ambiguity in the identification of $\mathcal{A}(M^{d-1})$ with $\mathbb{C}$ translates into a phase ambiguity in the definition of the partition function as a complex number. This is the simplest incarnation of an anomaly, occurring for instance in chiral fermionic theories. When we consider families of manifolds, we obtain a Hermitian line bundle over the parameter space, of which the partition function is a section. In general, $\mathcal{A}(M^{d-1})$ can be an arbitrary $n$-dimensional Hilbert space and the partition function can be (non-canonically) pictured as an $n$-component vector. This is the situation that arises for 2-dimensional chiral conformal field theories, or for the six-dimensional (2,0) theories. In these cases, the components of the partition function are traditionally called "conformal blocks".

\paragraph{Anomalous field theories with infinite dimensional state space} Our aim is ultimately to describe physically relevant quantum field theories, whose state spaces are infinite dimensional. Yet all the Hilbert spaces involved in the formalism above are finite dimensional. How can we then treat anomalous theories with infinite dimensional state space?

As we hinted above, it is naive to think of the finite-dimensional Hilbert spaces $H_i$ as being directly related to the states of the anomalous field theory. Indeed, to extract them, we had to pick an equivalence of categories $\mathcal{A}(M^{d-2}) \sim \mathcal{H}_1^n$. But an equivalence is not an equality: the simple objects $V_i$ of $\mathcal{A}(M^{d-2})$ are mapped through the equivalence to the one-dimensional Hilbert spaces $\mathbb{C}_i$, but they may very well be themselves infinite dimensional. The Hilbert spaces $H_i$ should be more appropriately thought of as multiplicity spaces, or Chan-Patton factors in physical parlance, so that the full state space of the theory reads $\bigoplus_i V_i \otimes H_i$.

In the example to be discussed later in Section \ref{SecExamAnomTh}, where $\mathcal{F}$ is a 2-dimensional rational chiral conformal field theory, $\mathcal{A}(M^{d-2})$ is the representation category of a rational vertex algebra. While such a category is equivalent to $\mathcal{H}_1^n$ for some $n$, the simple objects are infinite dimensional Hilbert spaces. 

This shows that the formalism above has no trouble accommodating anomalous theories with infinite-dimensional state spaces. We should however mention that it is not completely universal. For instance in the example of the previous paragraph, it is crucial that the chiral conformal field theory is rational. Non-rational vertex algebras can have representation categories generated by an infinite number of simple objects, and would require a genuinely infinite dimensional formalism to be properly accounted for.

\subsection{The anomaly of the partition function} 

\label{SecAnPartFunc}

We recognized in the previous section some facts familiar from the physical picture of anomalies, such as the fact that the partition function of $\textcal{f}$, for $\mathcal{A}$ invertible, has a phase ambiguity. However, in physical contexts, anomalies are most often pictured as the breaking of some symmetry of the classical field theory in the quantum field theory. We will show here that anomalous field theories can indeed fail to be invariant under the group of automorphisms of the $\mathsf{F}$-structure of the underlying manifold, which we see as the group of potential symmetries of the theory.

\paragraph{The symmetry group} Recall from Appendix \ref{ApGeomBord2Cat} that given an $(\mathsf{E},\mathsf{F})$-manifold $M^{d-2}$, we write $[M^{d-2}]$ for the underlying $\mathsf{F}$-manifold, and ${\rm Aut}_\mathsf{F}([M^{d-2}])$ for the automorphism group of the $\mathsf{F}$-structure on $[M^{d-2}]$. We write $\phi M^{d-2}$ for the isomorphic $(\mathsf{E},\mathsf{F})$-manifold obtained from $M^{d-2}$ by the action of $\phi \in {\rm Aut}_\mathsf{F}([M^{d-2}])$. The bordism category $\mathcal{B}^{d,2}_{\mathsf{E},\mathsf{F}}$ contains limit 1-morphisms that can be pictured as infinitesimal cylinders  $\lim_{\epsilon \rightarrow 0} M^{d-2} \times (-\epsilon, \epsilon)$. The ingoing boundary $-M^{d-2} \times \{-\epsilon\}$ is identified with $-M^{d-2}$ through the identity map and the outgoing boundary $M^{d-2} \times \{\epsilon\}$ is identified with $\phi M^{d-2}$ through $\phi$. Limit 1-morphism provide a realization of the action of ${\rm Aut}_\mathsf{F}([M^{d-2}])$ on the collection of $(\mathsf{E},\mathsf{F})$-manifolds $N$ such that $[N] = [M^{d-2}]$. Analogous statements hold for 1-morphisms. We have an automorphism group ${\rm Aut}_\mathsf{F}([M^{d-1,1}])$ of $\mathsf{F}$-structures on $[M^{d-1,1}]$. We have limit 2-morphisms realizing the action of ${\rm Aut}_\mathsf{F}([M^{d-1,1}])$ on the collection of $(\mathsf{E},\mathsf{F})$-manifolds with the same underlying $\mathsf{F}$-structure as $M^{d-1,1}$. We will abuse the notation and write $\phi$ as well for the limit morphism associated to an automorphism $\phi$.

${\rm Aut}_\mathsf{F}([M])$ should be pictured as a (potential) symmetry of the field theories defined from the data $\mathsf{F}$ on the manifold $M$. We will see that while the anomaly field theory $\mathcal{A}$ is invariant under this symmetry, the anomalous field theory $\textcal{f}$ is not necessarily invariant. We will also show that the lack of invariance of the anomalous theory can be characterized by group cohomology classes of ${\rm Aut}_\mathsf{F}([M])$, recovering results from the physics literature.

\paragraph{The partition function anomaly} The anomalous field theory is a 2-natural transformation $\textcal{f}: \mathcal{A}|_{d-1} \rightarrow \bm{1}|_{d-1}$, where $\mathcal{A}, \bm{1} : \mathcal{B}^{d,2}_{\mathsf{E},\mathsf{F}} \rightarrow \mathcal{H}_2$ are the 2-functors corresponding to the anomaly field theory and the trivial theory, respectively. As before, $|_{d-1}$ denotes the truncation of these functors to manifolds and bordisms of dimension $d-1$ or less. Let us start by considering a closed $d-1$-dimensional manifold $M^{d-1}$. We assume that the anomaly field theory is a field theory defined on $\mathsf{F}$-manifolds, whose field theory functor $\mathcal{A}$ has been pulled-back from $\mathcal{B}^{d,2}_{\mathsf{F}}$ to $\mathcal{B}^{d,2}_{\mathsf{E},\mathsf{F}}$ (see Appendix \ref{ApGeomBord2Cat}). Then $\mathcal{A}(M^{d-1}) = \mathcal{A}(\phi M^{d-1})$, and the compatibility with the gluing of limit morphisms ensures that $\mathcal{A}(\phi)$ form a unitary representation of ${\rm Aut}_\mathsf{F}([M^{d-1}])$ on the state space $\mathcal{A}(M^{d-1})$. A similar reasoning applies to the trivial field theory $\bm{1}$. $\bm{1}(\phi)$ is the identity homomorphism $\mathbb{C} \rightarrow \mathbb{C}$, so $\bm{1}$ provides the trivial representation of ${\rm Aut}_\mathsf{F}([M^{d-1}])$ on $\mathbb{C}$.

Recall that the anomalous field theory provides a homomorphism $\textcal{f}\:(M^{d-1}): \mathcal{A}(M^{d-1}) \rightarrow \mathbb{C}$. The fact that $\textcal{f}$ is a natural transformation requires it to intertwine the representations defined by $\mathcal{A}$ and $\bm{1}$. Concretely, we deduce that 
\be
\label{EqIntertwPartFunc}
\textcal{f}\:(M^{d-1}) = \textcal{f}\:(\phi M^{d-1}) \circ \mathcal{A}(\phi)
\ee
Let us write ${\rm Aut}_{\mathsf{E},\mathsf{F}}(M^{d-1})$ for the group of automorphisms of the $(\mathsf{E},\mathsf{F})$-manifold $M^{d-1}$ in the category $\mathcal{M}_{\mathsf{E},\mathsf{F}}$. We have:
\begin{proposition}
\label{PropPartFuncInvAut}
The vector of partition functions $\textcal{f}\:(M^{d-1})$ vanishes outside the space of invariants of the action of ${\rm Aut}_{\mathsf{E},\mathsf{F}}(M^{d-1})$ on $\mathcal{A}(M^{d-1})$.
\end{proposition}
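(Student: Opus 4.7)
The plan is to derive the statement as a direct consequence of the intertwining relation \eqref{EqIntertwPartFunc}, specialized to automorphisms that actually fix the $(\mathsf{E},\mathsf{F})$-manifold $M^{d-1}$ as an object in $\mathcal{M}_{\mathsf{E},\mathsf{F}}$, together with the unitarity of the representation of the symmetry group on $\mathcal{A}(M^{d-1})$.

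First I would pin down the key simplification that distinguishes ${\rm Aut}_{\mathsf{E},\mathsf{F}}(M^{d-1})$ from the larger group ${\rm Aut}_{\mathsf{F}}([M^{d-1}])$: for $\phi \in {\rm Aut}_{\mathsf{E},\mathsf{F}}(M^{d-1})$, the action on the $(\mathsf{E},\mathsf{F})$-manifold gives $\phi M^{d-1} = M^{d-1}$ as an object (not merely on the underlying $\mathsf{F}$-manifold). Substituting into \eqref{EqIntertwPartFunc} then yields the strict invariance relation
\be
\textcal{f}\:(M^{d-1}) = \textcal{f}\:(M^{d-1}) \circ \mathcal{A}(\phi)
\qquad \text{for every } \phi \in {\rm Aut}_{\mathsf{E},\mathsf{F}}(M^{d-1}).
\ee

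Next I would exploit the unitarity assumption on the field theory functor $\mathcal{A}$: compatibility with the $\dagger$-structure on $\mathcal{B}^{d,2}_{\mathsf{E},\mathsf{F}}$ forces $\mathcal{A}(\phi)$ to be a unitary operator on the Hilbert space $\mathcal{A}(M^{d-1})$, so the assignment $\phi \mapsto \mathcal{A}(\phi)$ defines a unitary representation of ${\rm Aut}_{\mathsf{E},\mathsf{F}}(M^{d-1})$. Writing the homomorphism $\textcal{f}\:(M^{d-1}): \mathcal{A}(M^{d-1}) \to \mathbb{C}$ via Riesz as $\textcal{f}\:(M^{d-1})(\cdot) = \langle \psi, \cdot \rangle$ for a unique $\psi \in \mathcal{A}(M^{d-1})$, the invariance relation above becomes $\langle \psi, v \rangle = \langle \psi, \mathcal{A}(\phi) v \rangle = \langle \mathcal{A}(\phi)^{-1} \psi, v \rangle$ for all $v$, hence $\mathcal{A}(\phi)^{-1} \psi = \psi$ for every $\phi$. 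Thus $\psi$ lies in the subspace of ${\rm Aut}_{\mathsf{E},\mathsf{F}}(M^{d-1})$-invariants.

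Finally I would conclude by using the orthogonal decomposition $\mathcal{A}(M^{d-1}) = V^{\rm inv} \oplus (V^{\rm inv})^\perp$ afforded by unitarity: since $\psi \in V^{\rm inv}$, the functional $\langle \psi, \cdot \rangle$ annihilates $(V^{\rm inv})^\perp$, which is precisely the statement that $\textcal{f}\:(M^{d-1})$ vanishes outside the space of invariants. There is no real obstacle here; the only subtlety worth flagging explicitly is the distinction between automorphisms in $\mathcal{M}_{\mathsf{E},\mathsf{F}}$ (which literally fix the object $M^{d-1}$ and give the strict equality in \eqref{EqIntertwPartFunc}) and the larger automorphism group of the underlying $\mathsf{F}$-manifold, which in general maps $M^{d-1}$ to a different object $\phi M^{d-1}$ with the same underlying $\mathsf{F}$-structure and whose analysis will be postponed to the subsequent subsections on Hamiltonian anomalies.
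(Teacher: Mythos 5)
Your proof is correct and follows essentially the same route as the paper: specialize \eqref{EqIntertwPartFunc} to $\phi \in {\rm Aut}_{\mathsf{E},\mathsf{F}}(M^{d-1})$, where $\phi M^{d-1} = M^{d-1}$, so that the partition function becomes an intertwiner with the trivial representation, and then use unitarity to conclude it vanishes off the invariant subspace. The only difference is that you spell out, via the Riesz representation and the orthogonal decomposition $V^{\rm inv} \oplus (V^{\rm inv})^\perp$, the standard fact that the paper simply asserts ("such an intertwiner can be non-vanishing only on the space of invariants").
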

\begin{proof}
If $\phi \in {\rm Aut}_{\mathsf{E},\mathsf{F}}(M^{d-1})$, then $\phi M^{d-1} = M^{d-1}$. \eqref{EqIntertwPartFunc} then shows that $\textcal{f}\:(M^{d-1})$ is an intertwiner between the representation of ${\rm Aut}_{\mathsf{E},\mathsf{F}}(M^{d-1})$ determined by $\mathcal{A}$ and the trivial representation (determined by $\bm{1}$). Such an intertwiner can be non-vanishing only on the space of invariants.
\end{proof}
We see here the importance of working with manifolds in the category $\mathcal{M}_{\mathsf{E},\mathsf{F}}$ rather than in $\mathcal{M}_{\mathsf{F}}$. The latter case corresponds to setting $\mathsf{E} = \mathsf{F}$, so ${\rm Aut}_{\mathsf{E},\mathsf{F}}(M^{d-1}) = {\rm Aut}_\mathsf{F}([M^{d-1}])$ and the anomalous theory has to be invariant under the symmetry group ${\rm Aut}_\mathsf{F}([M^{d-1}])$ (at the expense of the vanishing of part or all of its partition functions). 

The kind of anomaly leading to a vanishing of the partition function through Proposition \ref{PropPartFuncInvAut} appeared in the physics literature. For instance in \cite{Witten:1999vg}, it was shown that the partition function of a self-dual field vanishes unless certain torsion background fluxes are turned on. The torsion fluxes required to have a non-vanishing partition function are precisely those that make trivial the representation of gauge transformations associated to certain torsion classes on the Hermitian line in which the partition function of the self-dual field takes value. A detailed discussion can be found in Section 3.6 of \cite{Monniera}.

From now on, we assume that the structure $\mathsf{E}$ has been chosen so that ${\rm Aut}_{\mathsf{E},\mathsf{F}}(M^{d-1})$ is the trivial group for all manifolds $M^{d-1}$. A choice for $\mathsf{E}$ satisfying this condition is described in Appendix \ref{ApGeomBord2Cat}. %Of course, this does not affect ${\rm Aut}_\mathsf{F}([M^{d-1}])$. 
In this case, \eqref{EqIntertwPartFunc} always admits non-vanishing solutions. We say that $\textcal{f}$ has an anomalous symmetry if its partition function fails to be invariant under the symmetry group ${\rm Aut}_\mathsf{F}([M^{d-1}])$. The following statement holds almost tautologically:
\begin{proposition}
The theory $\textcal{f}$ has an anomalous symmetry unless $\textcal{f}\:(M^{d-1})$ takes value in the invariants of the action of ${\rm Aut}_\mathsf{F}([M^{d-1}])$ for all $[M^{d-1}]$. If the representation determined by $\mathcal{A}$ is irreducible and non-trivial, $\textcal{f}\:(M^{d-1})$ is invariant under the symmetry ${\rm Aut}_\mathsf{F}([M^{d-1}])$ if and only if it vanishes. 
\end{proposition}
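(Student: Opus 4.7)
The plan is to observe that, under the standing assumption ${\rm Aut}_{\mathsf{E},\mathsf{F}}(M^{d-1}) = 1$, both assertions follow by unpacking the intertwining relation \eqref{EqIntertwPartFunc} and invoking elementary representation theory. I would first use the Hermitian structure on $\mathcal{A}(M^{d-1})$ to identify the partition function $\textcal{f}\:(M^{d-1}): \mathcal{A}(M^{d-1}) \rightarrow \mathbb{C}$ with a vector $u_{M^{d-1}} \in \mathcal{A}(M^{d-1})$. Using the unitarity of $\mathcal{A}(\phi)$, the relation \eqref{EqIntertwPartFunc} then translates into $u_{\phi M^{d-1}} = \mathcal{A}(\phi)\, u_{M^{d-1}}$. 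Hence as $M^{d-1}$ varies over the $(\mathsf{E},\mathsf{F})$-lifts of a fixed $\mathsf{F}$-manifold $[M^{d-1}]$, the corresponding vectors trace out a single ${\rm Aut}_\mathsf{F}([M^{d-1}])$-orbit, and invariance of the partition function under the symmetry group is equivalent to $u_{M^{d-1}}$ lying in the fixed subspace of this representation.

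For the first assertion, I would appeal directly to the definition given in the paragraph preceding the proposition: the theory $\textcal{f}$ has an anomalous symmetry iff there exists some $[M^{d-1}]$ whose partition function fails to be invariant. By the identification above, this is precisely the negation of the stated condition, yielding the advertised tautology. For the second assertion, I would invoke Schur's lemma: the fixed subspace $\mathcal{A}(M^{d-1})^{{\rm Aut}_\mathsf{F}([M^{d-1}])}$ is a subrepresentation of an irreducible representation, hence is either $\{0\}$ or all of $\mathcal{A}(M^{d-1})$. The latter would force $\mathcal{A}(\phi) = \mathrm{id}$ for every $\phi$, contradicting the non-triviality hypothesis, so the fixed subspace must be $\{0\}$ and invariance of $\textcal{f}\:(M^{d-1})$ reduces to its vanishing.

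There is no genuine obstacle here; the only care needed is to keep straight the distinction between the dual vector $\textcal{f}\:(M^{d-1})$ and its image in $\mathcal{A}(M^{d-1})$ under the canonical antilinear isomorphism supplied by the Hermitian pairing. Because $\mathcal{A}(\phi)$ is unitary, the notion of invariance matches on both sides, so the choice of convention is immaterial and the proposition reduces, as anticipated, to definition-chasing plus Schur.
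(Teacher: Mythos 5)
Your proof is correct and takes essentially the same approach the paper intends: the paper gives no explicit argument, remarking only that the statement ``holds almost tautologically,'' and your unpacking of the intertwining relation \eqref{EqIntertwPartFunc} into $u_{\phi M^{d-1}} = \mathcal{A}(\phi)\,u_{M^{d-1}}$, followed by the definition of an anomalous symmetry and Schur's lemma for the second assertion, is precisely that tautology made explicit. No gaps.
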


%Therefore if the representation of ${\rm Aut}_\mathsf{F}(M^{d-1})$ determined by $\mathcal{A}$ is non-trivial, the partition functions (or "conformal blocks") of the theory $\textcal{f}$ fail to be invariant under ${\rm Aut}_\mathsf{F}(M^{d-1})$.

In particular, when $\mathcal{A}$ is invertible, $\mathcal{A}(M^{d-1})$ is 1-dimensional and the failure of invariance of the partition function of $\textcal{f}$ is by a phase. This phase is a character of ${\rm Aut}_\mathsf{F}([M^{d-1}])$, or equivalently a group 1-cocycle on ${\rm Aut}_\mathsf{F}([M^{d-1}])$ valued in $\mathbb{T}$. The anomaly is absent if the associated group cohomology class is trivial, which in degree 1 actually requires that the cocycle itself is trivial. When ${\rm Aut}_\mathsf{F}([M^{d-1}])$ happens to be a Lie group, the corresponding Lie algebra cocycle condition is well-known in the physics literature and goes under the name of the Wess-Zumino consistency condition, see for instance Section 22.6 of \cite{weinberg1996quantum}. The group cocycle itself was first described in \cite{Witten:1983tw}. We have therefore recovered the familiar physical picture of the anomaly. 

When the anomaly theory $\mathcal{A}$ is not invertible, the vector of partition functions transforms in the unitary representation of ${\rm Aut}_\mathsf{F}([M^{d-1}])$ on $\mathcal{A}(M^{d-1})$. Such a representation can be pictured as a non-abelian group 1-cocycle on ${\rm Aut}_\mathsf{F}([M^{d-1}])$ valued in $U(n)$. This is for instance familiar in the case of 2-dimensional chiral conformal field theories. $M^{d-1}$ is then a 2-dimensional surface endowed with a conformal structure. ${\rm Aut}_\mathsf{F}([M^{d-1}])$ includes the modular group, which is the group of diffeomorphisms preserving a given conformal structure. The conformal blocks of the theory are not invariant, but transform in a unitary representation of the modular group. On the torus, the modular group is isomorphic to $SL(2,\mathbb{Z})$ and the representation is determined by the $S$ and $T$ matrices corresponding to the generators of $SL(2,\mathbb{Z})$, see for instance Chapter 10 of \cite{CFT1997}.

\subsection{The Hamiltonian anomaly in the invertible case} 

\label{SecHamAnInvCase}

We now investigate what happens on a $d-2$-dimensional manifold $M^{d-2}$. When $\mathcal{A}$ is invertible, we expect to recover the Hamiltonian anomaly, i.e. the fact that the symmetry group is represented on the state space of the theory by a projective representation, characterized by a group cohomology class of degree 2 valued in $\mathbb{T}$ \cite{Faddeev:1984jp, Faddeev:1985iz, Mickelsson1987}. This result was obtained recently for topological field theories using similar ideas in \cite{2014arXiv1409.5723F}. To understand how Hamiltonian anomalies arise, let us assume again that the anomaly field theory functor is a pull-back from $\mathcal{B}^{d,2}_\mathsf{F}$ so that that $\mathcal{A}(M^{d-2}) = \mathcal{A}(\phi M^{d-2})$ for $\phi \in {\rm Aut}_\mathsf{F}([M^{d-2}])$. The anomaly field theory is therefore invariant under the symmetry group ${\rm Aut}_\mathsf{F}([M^{d-2}])$. In complete analogy to what happened in the previous section, $\mathcal{A}$ defines a "2-representation" of ${\rm Aut}_\mathsf{F}([M^{d-2}])$ on $\mathcal{A}(M^{d-2})$. By this, we mean that for each automorphism $\phi \in {\rm Aut}_\mathsf{F}([M^{d-2}])$, we have a functor $\mathcal{A}(\phi): \mathcal{A}(M^{d-2}) \rightarrow \mathcal{A}(M^{d-2})$, and the composition of these functors reproduces the group law of ${\rm Aut}_\mathsf{F}([M^{d-2}])$: $\mathcal{A}(\phi_1) \circ \mathcal{A}(\phi_1) = \mathcal{A}(\phi_1\phi_2)$. $\bm{1}$ defines a trivial 2-representation of ${\rm Aut}_\mathsf{F}([M^{d-2}])$ on $\mathcal{H}_1$. The fact that $\textcal{f}$ is a 2-natural transformation implies that we have a natural transformation 
\be
\label{EqNatTransFPhi}
\textcal{f}\:(\phi): \textcal{f}\:(M^{d-2}) \rightarrow  \textcal{f}\:(\phi M^{d-2}) \circ \mathcal{A}(\phi) \;,
\ee
where now $\circ$ is the composition of functors. We have an analogue of Proposition \ref{PropPartFuncInvAut}:
\begin{proposition}
\label{PropRestStSpInv}
The state space $\textcal{f}\:(M^{d-2})$ belongs to the subcategory of $\mathcal{A}(M^{d-2})$ invariant under the action of ${\rm Aut}_{\mathsf{E},\mathsf{F}}(M^{d-2})$.
\end{proposition}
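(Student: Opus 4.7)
My plan is to categorify the proof of Proposition \ref{PropPartFuncInvAut}. Specializing \eqref{EqNatTransFPhi} to $\phi \in {\rm Aut}_{\mathsf{E},\mathsf{F}}(M^{d-2})$, we have $\phi M^{d-2} = M^{d-2}$ as $(\mathsf{E},\mathsf{F})$-manifolds (not merely as underlying $\mathsf{F}$-manifolds), so the 2-naturality condition collapses to a 2-cell
\begin{equation}
\textcal{f}\:(\phi): \textcal{f}\:(M^{d-2}) \Rightarrow \textcal{f}\:(M^{d-2}) \circ \mathcal{A}(\phi)
\end{equation}
in the functor category $\textcal{Hom}_{\mathcal{H}_2}(\mathcal{A}(M^{d-2}), \mathcal{H}_1)$. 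This is the direct analogue of \eqref{EqIntertwPartFunc}, but with equality replaced by a natural transformation, as expected one categorical level up.

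The first substantive step is to show that $\textcal{f}\:(\phi)$ is a natural \emph{isomorphism}. The limit 1-morphism attached to $\phi$ is invertible in $\mathcal{B}^{d,2}_{\mathsf{E},\mathsf{F}}$, with inverse the limit 1-morphism attached to $\phi^{-1}$, and 2-functoriality of $\mathcal{A}$ together with the composition rule for limit morphisms makes $\mathcal{A}(\phi)$ an autoequivalence of $\mathcal{A}(M^{d-2})$. Applying 2-naturality of $\textcal{f}$ to $\phi\phi^{-1}=\phi^{-1}\phi = \mathrm{id}$, and using the unit coherence of $\textcal{f}$ on the identity limit 1-morphism, one identifies an inverse of $\textcal{f}\:(\phi)$ built from $\textcal{f}\:(\phi^{-1})$ whiskered with $\mathcal{A}(\phi)$. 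Consequently $\textcal{f}\:(M^{d-2})$ and $\textcal{f}\:(M^{d-2}) \circ \mathcal{A}(\phi)$ are naturally isomorphic $\mathcal{H}_1$-linear functors, coherently in $\phi \in {\rm Aut}_{\mathsf{E},\mathsf{F}}(M^{d-2})$.

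The final step is interpretive. Via the standard $\mathcal{H}_2$-duality, the $\mathcal{H}_1$-linear functor $\textcal{f}\:(M^{d-2}) : \mathcal{A}(M^{d-2}) \to \mathcal{H}_1$ is identified with a representing object $V_{\textcal{f}}$ of $\mathcal{A}(M^{d-2})^\dagger \simeq \mathcal{A}(M^{d-2})$ (this is the categorification of the identification of $\textcal{f}\:(M^{d-1})$ with an element of $\mathcal{A}(M^{d-1})^\dagger = \mathcal{A}(M^{d-1})$ used in Proposition \ref{PropPartFuncInvAut}). Under this identification precomposition by $\mathcal{A}(\phi)$ translates into the dual action on $V_{\textcal{f}}$, so the natural isomorphisms of the previous step become coherent isomorphisms between $V_{\textcal{f}}$ and its transforms, placing it in the ${\rm Aut}_{\mathsf{E},\mathsf{F}}(M^{d-2})$-invariant subcategory of $\mathcal{A}(M^{d-2})$.

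The main obstacle is the second step: showing that the 2-cells $\textcal{f}\:(\phi)$ are invertible requires unpacking the definition of 2-natural transformation on limit 1-morphisms and checking that the coherence isomorphisms supplied by $\mathcal{A}$ on the compositions $\phi\phi^{-1}$ and $\phi^{-1}\phi$ assemble into a two-sided inverse. The remaining steps, including the $\mathcal{H}_2$-duality identification used in the last paragraph, are essentially formal once this invertibility is established.
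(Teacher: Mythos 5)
Your proposal is correct and follows the route the paper intends: the paper offers no separate proof of this proposition, merely declaring it the analogue of Proposition \ref{PropPartFuncInvAut}, and your argument is precisely the categorification of that proof, specializing \eqref{EqNatTransFPhi} to $\phi M^{d-2}=M^{d-2}$ and transporting the resulting 2-cells through the duality $\mathcal{A}(M^{d-2})^\dagger\simeq\mathcal{A}(M^{d-2})$. Your two explicit additions --- deriving invertibility of $\textcal{f}\:(\phi)$ from the unit and composition coherences applied to $\phi\phi^{-1}=\phi^{-1}\phi=\mathrm{id}$ (which works because whiskering by the invertible $\mathcal{A}(\phi^{\pm1})$ reflects one-sided inverses), and the Riesz-type identification of the functor with a representing object --- are exactly the steps the paper leaves implicit, and both are sound.
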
 
In what follows, we assume that the $\mathsf{E}$-structure was chosen so that ${\rm Aut}_{\mathsf{E},\mathsf{F}}(M^{d-2})$ is trivial, so that Proposition \ref{PropRestStSpInv} provides no constraint. $\mathsf{E}$-structures with this property are described in Appendix \ref{ApGeomBord2Cat}.

The above is valid both for invertible and non-invertible anomaly field theories. Let us now focus on the case when $\mathcal{A}$ is an invertible field theory. Then $\mathcal{A}(M^{d-2})$ is a 2-Hermitian line, a category non-canonically equivalent to $\mathcal{H}_1$. Pick an equivalence $\chi$. $\mathcal{A}(\phi)$ can now be pictured as an invertible $\mathcal{H}_1$-linear functor $\mathcal{H}_1 \rightarrow \mathcal{H}_1$. Such a functor is the tensor product with a Hermitian line $L_{\chi,\phi}$. As we have a 2-representation, we have a canonical isomorphism 
\be
\label{EqIsomTensLineBunGerbe}
L_{\chi,\phi_1} \otimes L_{\chi,\phi_2} \otimes L_{\chi,\phi_2^{-1} \phi_1^{-1}} \simeq \mathbb{C} \;.
\ee 
Pick in addition for each $L_{\chi,\phi}$ a non-canonical isomorphism $L_{\chi,\phi} \simeq \mathbb{C}$, and for notational convenience, include this extra data in the symbol $\chi$. The isomorphism \eqref{EqIsomTensLineBunGerbe} is then a unitary transformation $\mathbb{C} \rightarrow \mathbb{C}$, i.e. an element $\alpha_{\chi,\phi_1, \phi_2}$ of $\mathbb{T}$. Standard arguments show that $\alpha$ is a 2-cocycle for the group ${\rm Aut}_\mathsf{F}([M^{d-2}])$. This is the 2-cocycle described by Faddeev in \cite{Faddeev:1984jp}, and whose infinitesimal version was described by Mickelsson in \cite{Mickelsson1985}. The cocycle itself is dependent on the choices of equivalence and isomorphisms $\chi$, but its group cohomology class is not. These two claims are consequences of the general result proven in the next section for the case of a not necessarily invertible anomaly field theory.

We use the chosen equivalence of $\mathcal{A}(M^{d-2})$ with $\mathcal{H}_1$ to see $\textcal{f}\:(M^{d-2})$ as an $\mathcal{H}_1$-linear functor from $\mathcal{H}_1$ to itself, hence as a Hilbert space $H_\chi(M^{d-2})$. \eqref{EqNatTransFPhi} can be rewritten as an isomorphism
\be
f(\phi)|_\chi : H_\chi(M^{d-2}) \rightarrow H_\chi(\phi M^{d-2}) \otimes L_{\chi,\phi} \;.
\ee
We now see that given two group elements $\phi_1, \phi_2 \in {\rm Aut}_\mathsf{F}([M^{d-2}])$,
\be
f(\phi_2^{-1} \phi_1^{-1})|_\chi \circ f(\phi_2)|_\chi \circ f(\phi_1)|_\chi : H_\chi(M^{d-2}) \rightarrow H_\chi(M^{d-2})
\ee
is given by the multiplication by the cocycle $\alpha_{\chi, \phi_1, \phi_2}$. We recovered the fact that for invertible anomalies, the representation of ${\rm Aut}_\mathsf{F}([M^{d-2}])$ on the state space of the anomalous theory is only a projective one, characterized by the 2-cocycle $\alpha$.

\subsection{The Hamiltonian anomaly in the general case} 

\label{SecHamAnGenCase}

It is interesting to consider what happens when the anomaly theory is not invertible. As far as we are aware, this situation has not been described in the physics literature yet and this is the case relevant for six-dimensional (2,0) theories. 

\paragraph{Unpacking the definitions} When $\mathcal{A}$ is not invertible, we still have natural transformations \eqref{EqNatTransFPhi}. But now $\mathcal{A}(M^{d-2})$ is non-canonically equivalent to $\mathcal{H}_1^n$, on which the functors $\mathcal{A}(\phi)$ provide a 2-representation of ${\rm Aut}_\mathsf{F}([M^{d-2}])$. Let us pick again an equivalence $\chi$. A generic $\mathcal{H}_1$-linear functor from $\mathcal{H}_1^n$ to itself can be represented as a matrix of Hilbert spaces \cite{2008arXiv0812.4969B}. The invertibility of the functors $\mathcal{A}(\phi)$ implies two facts. First, their matrix elements $L_{\chi,\phi}^{ij}$, $1 \leq i,j \leq n$ can only be either Hermitian lines or the zero-dimensional Hilbert space. In the latter case, we say that the matrix element is "vanishing". Moreover, as $\mathcal{H}_1^n$ is a semiring and not a ring, invertibility also requires the matrix to be a permutation matrix, i.e. that there is a single non-vanishing entry on each line and column. Let us write $L_{\chi,\phi}$ for the matrix with matrix elements $L_{\chi,\phi}^{ij}$ and $\ell_{\chi,\phi}$ for the associated permutation matrix, i.e. the matrix obtained from $L_{\chi,\phi}$ by replacing Hermitian lines by $1$ and the zero vector space by $0$. Let $\boxtimes$ be the combination of the tensor product and the matrix multiplication, i.e.:
\be
(L_1 \boxtimes L_2)^{ik} = \bigoplus_{j = 1}^n L_1^{ij} \otimes L_2^{jk} \;.
\ee
The fact that we have a 2-representation means that there are canonical isomorphisms 
\be
\label{Eq2RepNonInvCase}
L_{\chi,\phi_1} \boxtimes L_{\chi,\phi_2} \boxtimes L_{\chi,\phi_2^{-1} \phi_1^{-1}} \simeq \mathds{1} \;,
\ee
where $\mathds{1}$ is the matrix of Hilbert space that has copies of $\mathbb{C}$ on the diagonal and vanishing matrix elements off the diagonal. 

Again, let us pick for each $L_{\chi,\phi}^{ij}$ that is different from the zero Hilbert space a non-canonical isomorphism  $L_{\chi,\phi}^{ij} \simeq \mathbb{C}$, which we include in the data $\chi$. \eqref{Eq2RepNonInvCase} then provides a $\mathbb{C}$-valued unitary matrix $\alpha^{ik}_{\chi, \phi_1,\phi_2}$, which is the product of a permutation matrix with a diagonal matrix with entries in $\mathbb{T}$, i.e. we obtain an element $\alpha_{\chi, \phi_1,\phi_2} \in S_n \ltimes \mathbb{T}^n$. (The semi-direct product is with respect to the permutation action of the symmetric group on $\mathbb{T}^n$.)

\paragraph{The non-abelian 2-cocycle} Writing $\lambda_{\chi,\phi} = {\rm Ad}(\ell_{\chi,\phi})$ for the adjoint action of $\ell_{\chi,\phi}$ on $S_n \ltimes \mathbb{T}^n$, it is not difficult to check that we have the relation (dropping the mention of the data $\chi$ to lighten the notation)
\be
\lambda_{\phi_1} \lambda_{\phi_2} \lambda_{\phi_2^{-1}\phi_1^{-1}} = {\rm Ad}(\alpha_{\phi_1,\phi_2}) \;.
\ee
Moreover, there are two different ways to use the isomorphism \eqref{Eq2RepNonInvCase} to identify $L_{\chi,\phi_1\phi_2\phi_3}$ with $L_{\chi,\phi_1} \boxtimes L_{\chi,\phi_2} \boxtimes L_{\chi,\phi_3}$, which leads to the relation
\be
\lambda_{\phi_1}(\alpha_{\phi_2, \phi_3}) \alpha_{\phi_1, \phi_2\phi_3} = \alpha_{\phi_1,\phi_2} \alpha_{\phi_1\phi_2, \phi_3} \;.
\ee
Comparing for instance with (5.1.10) of \cite{2006math.....11317B}, we see that the pair $(\lambda_\chi, \alpha_\chi)$ satisfies the same relations as the cocycle associated to a non-abelian gerbe.

(The discussion in \cite{2006math.....11317B} pertains to bundle gerbes over a topological space. In comparing with \cite{2006math.....11317B}, we must keep in mind that in our case, the gerbe effectively lives on the classifying space of the symmetry group $G = {\rm Aut}_\mathsf{F}([M^{d-2}])$. Recall that a classifying space $BG$ is the quotient of a contractible space $EG$ by a free action of $G$. Practically, this means that we can cover $EG$ with a unique chart that is acted on by $G$. We therefore identify pairs of cover indices in \cite{2006math.....11317B} with elements of $G$. For instance, if we identify $(ij)$ with $\phi_1$, $(jk)$ with $\phi_2$ and $(kl)$ with $\phi_3$, an object $X_{ijl}$ living on a triple intersection corresponds in our framework to an object $X_{\phi_1, \phi_2\phi_3}$.)

To show that the cohomology class of this cocycle is independent of the extra choices that we have collectively written $\chi$, we must study the dependence of $(\lambda_\chi, \alpha_\chi)$ on the latter. We made essentially two types of choices. The first one was the choice of equivalence between $\mathcal{A}(M^{d-2})$ and $\mathcal{H}_1^n$. The second one was the choice of isomorphisms $L^{ij}_{\chi,\phi} \simeq \mathbb{C}$. If we choose two equivalences between $\mathcal{A}(M^{d-2})$ and $\mathcal{H}_1^n$, they will differ by a permutation $r \in S_n$ of the generators of $\mathcal{H}_1^n$. We call $\chi$ and $\chi'$ the data encoding the two choices of equivalence. Writing $\rho = {\rm Ad}(r)$, we see that the cocycles are related by
\be
\left(\lambda_{\chi'}, \alpha_{\chi'}\right) = \left(\rho \lambda_{\chi} \rho^{-1}, \rho(\alpha_{\chi})\right) \;.
\ee
Comparing with (5.2.9) of \cite{2006math.....11317B} we see that $(\lambda_{\chi'}, \alpha_{\chi'})$ is cohomologous to $(\lambda_{\chi}, \alpha_{\chi})$. Suppose now that we change the isomorphisms $L_{\chi,\phi}^{ij} \simeq \mathbb{C}$. We can encode the changes of isomorphisms by elements $\theta_\phi \in \mathbb{T}^n$. We call again $\chi'$ the new data. The new cocycle reads
\be
\left(\{\lambda_{\chi',\phi}\}, \{\alpha_{\chi', \phi_1,\phi_2}\}\right) = \left(\{\lambda_{\chi,\phi} \}, \{\theta_{\phi_1} \lambda_{\chi,\phi_1}(\theta_{\phi_2}) \alpha_{\chi,\phi_1,\phi_2} \theta^{-1}_{\phi_1 \phi_2} \} \right) \;,
\ee
where we picture the $\theta$'s as diagonal matrices and use the matrix product, i.e $(\theta \alpha)^{ij} := \theta^i \alpha^{ij}$ and $(\alpha \theta)^{ij} = \theta^j \alpha^{ij}$.
Comparing with (5.2.9) of \cite{2006math.....11317B} we see again that $(\lambda_{\chi'}, \alpha_{\chi'})$ is cohomologous to $(\lambda_{\chi}, \alpha_{\chi})$.

The relevant non-abelian cohomology theory is the \v{C}ech hypercohomology 
\be
H:= \check{H}\left(B{\rm Aut}_{\mathsf{F}}([M^{d-2}]); S_n \ltimes \mathbb{T}^n \rightarrow S_n \right)
\ee
whose definition can be found in \cite{2006math.....11317B}. We have therefore proved:
\begin{proposition}
\label{ThHamAnCohomClass}
A $d-1$-dimensional anomalous field theory admitting a symmetry group ${\rm Aut}_{\mathsf{F}}([M^{d-2}])$ on a $d-2$ dimensional manifold $M^{d-2}$ determines a cohomology class in $H$.
\end{proposition}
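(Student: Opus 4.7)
The bulk of the argument is already laid out in the discussion preceding the statement; my plan is to organize it into three clean steps: construction of a representative cocycle from a choice of auxiliary data $\chi$, verification of the non-abelian 2-cocycle identities, and independence of the cohomology class from $\chi$.

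First, I would fix an $(\mathsf{E},\mathsf{F})$-manifold $M^{d-2}$ and assemble the data $\chi$ consisting of (i) an equivalence of 2-Hilbert spaces $\mathcal{A}(M^{d-2}) \simeq \mathcal{H}_1^n$, and (ii) for each $\phi \in G := {\rm Aut}_{\mathsf{F}}([M^{d-2}])$ and each non-zero matrix element $L_{\chi,\phi}^{ij}$ of the functor $\mathcal{A}(\phi)$ viewed as a matrix of Hilbert spaces, a unitary isomorphism $L_{\chi,\phi}^{ij} \simeq \mathbb{C}$. Using the invertibility of $\mathcal{A}(\phi)$, each such matrix has a unique non-zero entry in every row and column, giving a permutation $\ell_{\chi,\phi} \in S_n$. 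The 2-representation isomorphisms \eqref{Eq2RepNonInvCase} then produce, via the chosen trivializations, the matrix $\alpha_{\chi,\phi_1,\phi_2} \in S_n \ltimes \mathbb{T}^n$. Setting $\lambda_{\chi,\phi} = {\rm Ad}(\ell_{\chi,\phi})$, this gives the candidate pair $(\lambda_\chi, \alpha_\chi)$.

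Second, I would verify the two defining identities of a non-abelian 2-cocycle, namely
\be
\lambda_{\phi_1} \lambda_{\phi_2} \lambda_{\phi_2^{-1}\phi_1^{-1}} = {\rm Ad}(\alpha_{\phi_1,\phi_2}),
\qquad
\lambda_{\phi_1}(\alpha_{\phi_2,\phi_3}) \alpha_{\phi_1,\phi_2\phi_3} = \alpha_{\phi_1,\phi_2} \alpha_{\phi_1\phi_2,\phi_3},
\ee
which match (5.1.10) of \cite{2006math.....11317B}. The first follows directly from the definitions of $\lambda_\chi$ and $\alpha_\chi$ through \eqref{Eq2RepNonInvCase}; the second is the standard associativity/pentagon identity, obtained by equating the two natural ways of reducing $L_{\chi,\phi_1} \boxtimes L_{\chi,\phi_2} \boxtimes L_{\chi,\phi_3}$ to $L_{\chi,\phi_1\phi_2\phi_3}$ via two successive uses of the 2-representation coherence isomorphism, and tracking the scalar obtained under our chosen trivializations.

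Third, I would show that replacing $\chi$ by another choice $\chi'$ yields a cohomologous cocycle. Here there are two independent sources of ambiguity, which can be treated separately: a change of equivalence $\mathcal{A}(M^{d-2}) \simeq \mathcal{H}_1^n$ is encoded by a permutation $r \in S_n$ and produces, as computed in the text, the transformation $(\lambda_{\chi'}, \alpha_{\chi'}) = (\rho \lambda_\chi \rho^{-1}, \rho(\alpha_\chi))$ with $\rho = {\rm Ad}(r)$; and a change of the trivializations $L^{ij}_{\chi,\phi} \simeq \mathbb{C}$ is encoded by diagonal matrices $\theta_\phi \in \mathbb{T}^n$ and modifies $\alpha$ by the coboundary $\theta_{\phi_1} \lambda_{\chi,\phi_1}(\theta_{\phi_2}) \theta^{-1}_{\phi_1\phi_2}$. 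Both transformations are precisely the non-abelian coboundary relations (5.2.9) of \cite{2006math.....11317B}, so the class of $(\lambda_\chi, \alpha_\chi)$ in the \v{C}ech hypercohomology $H = \check{H}(BG; S_n \ltimes \mathbb{T}^n \rightarrow S_n)$ is independent of $\chi$.

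The main obstacle, really more of a bookkeeping hurdle than a conceptual one, is keeping the semi-direct product action of $S_n$ on $\mathbb{T}^n$ straight when verifying the cocycle and coboundary identities: the $\lambda_\phi$ twists the $\mathbb{T}^n$-factors in ways that must match precisely the conventions of \cite{2006math.....11317B}. Once the identification of cover indices $(ij),(jk),\ldots$ with group elements $\phi_1,\phi_2,\ldots$ on $EG$ is fixed (as indicated in the parenthetical remark in the text), both verifications reduce to transcribing the coherence data of the 2-representation into the non-abelian \v{C}ech complex.
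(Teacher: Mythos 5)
Your proposal is correct and follows essentially the same route as the paper: the construction of $(\lambda_\chi,\alpha_\chi)$ from the matrix representation of $\mathcal{A}(\phi)$, the verification of the two identities against (5.1.10) of \cite{2006math.....11317B}, and the two-part check of independence from $\chi$ via (5.2.9) are exactly the steps the text carries out before stating the proposition. Your reorganization into three explicit steps is a faithful repackaging of that argument, with no substantive difference.
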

Remark that with an invertible anomaly field theory, when $n = 1$, $\lambda$ does not contain any information and $\alpha$ is an ordinary $\mathbb{T}$-valued cocycle. In this case, the cohomology reduces to the ordinary group cohomology of ${\rm Aut}_{\mathsf{F}}([M^{d-2}])$ valued in $\mathbb{T}$. The fact that Hamiltonian anomalies can be characterized in this way has been known for some time \cite{Faddeev:1984jp}.

\paragraph{The twisted representation of the symmetry group on the vector of Hilbert spaces} Let us study the consequence of the above discussion for the action of the symmetry group on the state space of the theory. With our choice of non-canonical equivalence of $\mathcal{A}(M^{d-2})$ with $\mathcal{H}_1^n$, we can see $\textcal{f}\:(M^{d-2})$ as an additive functor from $\mathcal{H}_1^n$ to $\mathcal{H}_1$, hence as a vector of Hilbert spaces $H^i_\chi(M^{d-2})$. \eqref{EqNatTransFPhi} can be rewritten as an isomorphism
\be
f(\phi)|_\chi : H_\chi(M^{d-2}) \rightarrow H_\chi(\phi M^{d-2}) \boxtimes L_{\chi,\phi}\;,
\ee
where we wrote $H_\chi(M^{d-2})$ for the vector of Hilbert spaces. Remark that there is again only one term contributing non-trivially in the direct sum implicitly present on the right-hand side.
We now see that given two group elements $\phi_1, \phi_2 \in {\rm Aut}_\mathsf{F}([M^{d-2}])$,
\be
f(\phi_2^{-1} \phi_1^{-1})|_\chi \circ f(\phi_2)|_\chi \circ f(\phi_1)|_\chi : H_\chi(M^{d-2}) \rightarrow H_\chi(M^{d-2})
\ee
is given by the multiplication by the cocycle $\alpha_{\chi, \phi_1, \phi_2}$, where the multiplication also involves a matrix multiplication on the indices of the Hilbert spaces. We therefore discover that when the anomaly field theory is non-invertible, in general we do not get a representation of the symmetries on a single Hilbert space. We only get a projective representation on the vector $H_\chi(M^{d-2})$ of Hilbert space, in the sense described above. Note that this is in total analogy to what happens for the partition functions (or conformal blocks): the symmetry group does not have an action on a single partition function, but may permute them. The only difference is that while arbitrary unitary transformations of the space of partition functions can occur, the semiring structure of $\mathcal{H}_1$ reduces such transformations to elements of the group $S_n \ltimes \mathbb{T}^n$.

Let us also mention that when the cohomology class of Proposition \ref{ThHamAnCohomClass} is trivial, the proof of the proposition shows that the choices $\chi$ can be made in such a way that $\lambda_{\chi,\phi} = 1 \in S_n$ and $\alpha_{\chi,\phi_1,\phi_2} = 1 \in S_n \ltimes \mathbb{T}^n$. In this case we see that we get a linear representation of the symmetry group ${\rm Aut}_{\mathsf{F}}([M^{d-2}])$ on each of the Hilbert spaces $H^i_\chi(M^{d-2})$, and the symmetry is not anomalous. We therefore obtain 
\begin{proposition}
\label{ThNonAbCohomClassCharAnom}
If the representation of ${\rm Aut}_{\mathsf{F}}([M^{d-2}])$ on $\mathcal{A}(M^{d-2})$ is irreducible, the symmetry ${\rm Aut}_{\mathsf{F}}([M^{d-2}])$ is anomalous if and only if the cohomology class of Proposition \ref{ThHamAnCohomClass} is non-trivial.
\end{proposition}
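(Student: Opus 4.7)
The plan is to prove the two implications separately, with irreducibility entering only in the converse.

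For the forward direction---triviality of the cohomology class implies the symmetry is not anomalous---this is essentially the observation already made in the paragraph preceding the statement. If the non-abelian class in $H$ is trivial, there is a representative $\chi$ for which both $\lambda_{\chi,\phi} = 1 \in S_n$ and $\alpha_{\chi,\phi_1,\phi_2} = 1 \in S_n \ltimes \mathbb{T}^n$. Relative to such a $\chi$, the triple composition $f(\phi_2^{-1}\phi_1^{-1})|_\chi \circ f(\phi_2)|_\chi \circ f(\phi_1)|_\chi$, which in general acts on $H_\chi$ as multiplication by $\alpha_{\chi,\phi_1,\phi_2}$ together with a permutation of components by the $\lambda_\chi$'s, collapses to the identity. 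Each $H^i_\chi$ is preserved and carries an honest linear representation of $G := {\rm Aut}_{\mathsf{F}}([M^{d-2}])$, so the symmetry is not anomalous. Irreducibility is unnecessary here.

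For the converse, I would suppose the symmetry is non-anomalous, meaning (as defined in the paragraph preceding the proposition) there exist choices $\chi$ under which the natural transformations $\textcal{f}\:(\phi)$ assemble into a genuine linear $G$-representation on the vector $H_\chi(M^{d-2})$. Unpacking Section 3.5, the composition above realises, as an operator on $H_\chi$, multiplication by $\alpha_{\chi,\phi_1,\phi_2}$ together with an index permutation by $\lambda_{\chi,\phi}\lambda_{\chi,\phi_2}\lambda_{\chi,\phi_2^{-1}\phi_1^{-1}}$. Linearity forces this operator to act as the identity on every non-vanishing $H^i_\chi$, and forces $\lambda_{\chi,\phi}$ to preserve each such component. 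Now I invoke irreducibility, which amounts to $G$ acting transitively on $\{1,\dots,n\}$ through $\lambda_\chi$: picking any $i_0$ with $H^{i_0}_\chi \neq 0$ (the case $H_\chi = 0$ corresponds to a vacuous theory, which I exclude), the isomorphisms $f(\phi)|_\chi$ combined with transitivity propagate non-vanishing to every index, so $H^i_\chi \neq 0$ for all $i$. Therefore the conclusions of the previous sentence apply componentwise, and both $\lambda_{\chi,\phi} = 1 \in S_n$ and $\alpha_{\chi,\phi_1,\phi_2} = 1 \in S_n \ltimes \mathbb{T}^n$ hold identically. The cocycle $(\lambda_\chi, \alpha_\chi)$ is thus the trivial cocycle and its class in $H$ vanishes.

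The main obstacle I anticipate is precisely the last step. Since $\lambda_\chi$ transforms only by conjugation under a change of equivalence $\chi$, it is not immediately clear that non-anomalousness genuinely permits the simultaneous trivialisation $\lambda_{\chi,\phi} = 1$ and $\alpha_{\chi,\phi_1,\phi_2} = 1$; one must argue that irreducibility---and not some strictly stronger hypothesis---is exactly what is needed. Conceptually, irreducibility rules out the scenario in which the state space sits inside a proper invariant sub-2-representation of $\mathcal{A}(M^{d-2})$ on which the cocycle accidentally trivialises while the class on the full $\mathcal{A}(M^{d-2})$ remains non-trivial. Pinning this down carefully requires tracking how the $\boxtimes$-structure interacts with the index support of $H_\chi$, and handling separately the degenerate case $H_\chi = 0$.
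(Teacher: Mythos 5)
Your proposal follows the paper's own (very compressed) argument: the paper's entire proof consists of the paragraph preceding the statement, which establishes exactly your forward direction (trivial class $\Rightarrow$ a gauge $\chi$ with $\lambda_{\chi,\phi}=1$ and $\alpha_{\chi,\phi_1,\phi_2}=1$ $\Rightarrow$ an honest linear representation on each $H^i_\chi$ $\Rightarrow$ not anomalous), together with the remark after the proposition that irreducibility is there precisely to exclude the state space landing in a proper invariant subcategory of $\mathcal{A}(M^{d-2})$. Your converse is a faithful elaboration of that remark, with irreducibility correctly read as transitivity of the permutation action of $\ell_{\chi,\phi}$ on $\{1,\dots,n\}$. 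One reordering would tighten it: as written, you first deduce from non-anomalousness that $\ell_{\chi,\phi}$ fixes every index in the support of $H_\chi$, and only then invoke transitivity to carry a non-vanishing index $i_0$ to an arbitrary $j$ --- but the same $\phi$ cannot both fix $i_0$ and move it, so the ``propagation'' step as stated conflicts with the step before it (and it also silently identifies the components of $H_\chi(\phi M^{d-2})$ with those of $H_\chi(M^{d-2})$, which needs the naturality of the index set under the pullback of $\mathcal{A}$ from $\mathcal{B}^{d,2}_{\mathsf{F}}$). The clean order is: the isomorphisms $f(\phi)|_\chi$ show, independently of any anomaly hypothesis, that the support of $H_\chi$ is invariant under every $\ell_{\chi,\phi}$, so by transitivity it is empty or everything; excluding the vacuous case, non-anomalousness then forces every $\ell_{\chi,\phi}$ to act trivially, which with transitivity gives $n=1$, and the residual abelian cocycle $\alpha_\chi$ is trivialized by the genuine-representation condition. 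This reaches the same conclusion, so it is a presentational fix rather than a gap; your closing worry about simultaneous trivialization of $\lambda$ and $\alpha$ is resolved by noting that $\lambda$ changes only by conjugation, so a trivial class forces $\lambda_{\chi,\phi}=1$ in every gauge.
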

The irreducibility requirement is there to eliminate the possibility that $\mathcal{A}(M^{d-2})$ be a reducible 2-representation. In this case, even if $\mathcal{A}(M^{d-2})$ is non-trivial and associated to a non-zero cohomology class, $\textcal{f}\:(M^{d-2})$ could take value in the possibly non-vanishing invariant subcategory of $\mathcal{A}(M^{d-2})$ and be invariant under ${\rm Aut}_{\mathsf{F}}([M^{d-2}])$. Proposition \ref{ThNonAbCohomClassCharAnom} generalizes a corresponding well-known statement in the case where the anomaly field theory is invertible.

Let us finally mention that the picture above should generalize to families of manifolds and bordisms. To discuss these rigorously, we would need a better definition of the cobordism category, presumably along the lines of \cite{Ayal}, in which one obtains a natural topology on the moduli spaces of $\mathsf{F}$-manifolds and bordisms. We expect that the non-abelian gerbe characterized by the cohomology class of Proposition \ref{ThHamAnCohomClass}, which in our setup is defined over the classifying space of ${\rm Aut}_{\mathsf{F}}([M^{d-2}])$, should be promoted to a non-abelian bundle gerbe over the moduli space of $d-2$-dimensional $\mathsf{F}$-manifolds.

\subsection{Example} 

\label{SecExamAnomTh}

As was mentioned above, the prime examples of anomalous field theories with non-invertible anomaly field theories are 2-dimensional rational chiral conformal field theories and the six-dimensional (2,0) superconformal field theories. The anomaly field theories of the latter involve a certain refinement of the Dijkgraaf-Witten theory \cite{Dijkgraaf:1989pz}, which is unfortunately not completely straightforward to construct and which will be studied elsewhere.

The case of a 2-dimensional rational chiral conformal field theory can be treated in complete generality as follows. We do not make explicit many of the concepts and constructions in what follows, see for instance the books \cite{bakalov2001lectures, CFT1997}. Let $\mathcal{C}$ be a modular tensor category. To fix ideas, one can keep in mind the case of chiral Wess-Zumino-Witten theories, in which $\mathcal{C}$ is the category of positive energy representations of the level $k \in \mathbb{N}$ central extension of the loop group of a semi-simple Lie group $G$. Then on the one hand, $\mathcal{C}$ contains the Moore-Seiberg data \cite{Moore:1988qv} required to define a 2-dimensional rational chiral conformal field theory $\mathcal{R}_\mathcal{C}$ \cite{2005PNAS..102.5352H, Fuchs:2002cm, Fuchs:2004xi}. On the other hand, the Reshetikhin-Turaev construction \cite{Reshetikhin1991, turaev1994quantum} provides a 3-dimensional topological field theory $\mathcal{A}_\mathcal{C}$. 

$\mathcal{R}_\mathcal{C}$ is an anomalous field theory in the sense above, whose (generally non-invertible) anomaly field theory is $\mathcal{A}_\mathcal{C}$. Indeed, the chiral conformal field theory $\mathcal{R}_\mathcal{C}$ does not have a well-defined partition function, but a vector of "conformal blocks", which takes value in the state space of the Reshetikhin-Turaev theory $\mathcal{A}_\mathcal{C}$. As we already mentioned, the automorphism group of a surface up to conformal transformations includes the modular group, which is the group of diffeomorphisms preserving a given conformal structure. It is known that in general, the latter acts non-trivially on the space of conformal blocks. 

In codimension 2, $\mathcal{A}_\mathcal{C}(S^1) = \mathcal{C}$, the modular tensor category itself.  If $\mathcal{C}$ has $n$ simple objects, we have an equivalence $\mathcal{C} \sim \mathcal{H}_1^n$. As was discussed above, we can extract from the chiral conformal field theory $\mathcal{R}_\mathcal{C}$ a vector of Hilbert spaces $H_i$, $i = 1,...n$, which should be thought of as the multiplicities with which the simple objects of $\mathcal{C}$ occur in the spectrum of the chiral theory. The group ${\rm Aut}_\mathsf{F}([S^1])$ contains in particular the group of (orientation preserving) diffeomorphisms of the circle ${\rm Diff}^+(S^1)$. There is a gravitational anomaly characterized by the central charge of the chiral theory and whose effect is that ${\rm Diff}^+(S^1)$ is represented only projectively on the state space. As ${\rm Diff}^+(S^1)$ is connected, the degree 2 group cocycle associated with the gravitational anomaly does not involve non-trivial permutation matrices in $S_n \ltimes \mathbb{T}^n$. The non-abelian nature of the anomaly is therefore not manifest in this example. 

It should also be mentioned that in this particular example, $\mathcal{A}_\mathcal{C}$ is more than simply the anomaly field theory of $\mathcal{R}_\mathcal{C}$; the two theories are equivalent, in the sense that there is a prescription allowing the computation of any correlator of $\mathcal{R}_\mathcal{C}$ in $\mathcal{A}_\mathcal{C}$ (see for instance \cite{Fuchs:2002cm, Fuchs:2004xi}). This is of course not the case for generic pairs of anomalous and anomaly field theories.

It could be interesting to work out this example in more detail, but our main aim being the (2,0) theories, we will refrain from doing so. The rest of the paper is devoted to the construction of invertible anomaly field theories that are of physical interest.

\section{Wess-Zumino field theories}

\label{SecWZTh}

We present in this section a class of extended field theories describing the anomalies produced by the "Wess-Zumino terms" of the physics literature.

Our construction generalizes the construction of the classical Dijkgraaf-Witten theory by Freed in \cite{Freed:1994ad} and is strongly inspired by this work. Note that such theories have been constructed using elaborate technology under the name of $\infty$-Chern-Simons theories, see for instance \cite{Fiorenza:2012ec} or \cite{Fiorenza:2013jz}. %We provide here a concrete construction down to codimension 2.

Note also that as we are here describing only anomaly field theories, we do not need $\mathsf{E}$-structures and omit them to simplify the discussion. We  write $\mathsf{F}$-manifolds without square brackets.

\subsection{Definition}

Assume that $\mathsf{F}$ is a structure such that on a manifold $M$, $\mathsf{F}(M)$ includes the data of a $\mathbb{R}/\mathbb{Z}$-valued cocycle $\hat{c}$ of degree $d+1$ (see Appendix \ref{AppDiffCoc}). This typically occurs in two situations
\begin{itemize} 
\item Assume that $\mathsf{F}(M)$ includes the data of a map of $M$ into some classifying space endowed with such a cocycle. We obtain a cocycle on $M$ by pull-back via the classifying map.
\item Assume that $\mathsf{F}(M)$ include differential cocycles modeling gauge fields. We may be able to combine the latter into a degree $d+1$ differential cocycle, whose field strength vanishes for dimensional reason. Such a differential cocycle is equivalent to a degree $d$ $\mathbb{R}/\mathbb{Z}$-valued cocycle.
\end{itemize}
We obtain an extended field theory essentially by integrating $\hat{c}$. More precisely, the Wess-Zumino field theories are defined as follows.

\paragraph{Closed $d$-dimensional manifolds} We can integrate $\hat{c}$ over $M^d$ to obtain an element of $\mathbb{R}/\mathbb{Z}$. Picking a cycle representative $\hat{m}$ of the fundamental class of $M^d$, we define
\be
\label{EqDefWZdDimMan}
\textcal{Int}(\hat{c}, M^d) := \exp 2\pi i \langle \hat{c}, \hat{m} \rangle \in \mathbb{T} \subset \mathbb{C} \;,
\ee
where the angular bracket denotes the pairing between chains and cochains. This expression depends only on the cohomology class $c$ of $\hat{c}$. We define the partition function of the Wess-Zumino field theory associated to $c$ by 
\be
\label{EqDefWZdDimMan2}
\mathcal{W\!\!Z}_{c}(M^d) := \textcal{Int}(\hat{c}, M^d) \;.
\ee

\paragraph{Closed $d-1$-dimensional manifolds} Consider now a closed $d-1$-dimensional $\mathsf{F}$-manifold $M^{d-1}$ and pick a differential cocycle $\check{c} = (\hat{a}, \hat{h}, 0)$ of degree $d+1$ lifting the (necessarily trivial) cohomology class $c$. We cannot integrate $\hat{h}$ over $M^{d-1}$ in the usual sense of the term, but we can do the following \cite{Freed:1994ad}. 

Consider the category $\mathcal{C}_1$ whose objects are cycle representatives of the fundamental homology class of $M^{d-1}$, and whose morphisms between cycles $\hat{m}_1$ and $\hat{m}_2$ are chains $\hat{n}$ on $M^{d-1}$ with boundary $\partial \hat{n} = \hat{m}_2 - \hat{m}_1$. We can construct a functor $\mathcal{F}_1$ to the category $\mathcal{H}_1$ of Hilbert spaces as follows. $\mathcal{F}_1$ takes any object of $\mathcal{C}_1$ to $\mathbb{C}$, and takes any chain $\hat{n}$ to $\exp 2\pi i \langle \hat{n}, \hat{h} \rangle \in \mathbb{T}$, where $\langle \bullet , \bullet \rangle$ is the pairing between chains and cochains. The inverse limit construction, reviewed in Appendix \ref{AppInvLimConstr}, provides us with a Hermitian line $\textcal{Int}(\check{c}, M^{d-1})$, defined as the space of invariant sections of the functor $\mathcal{F}_1$.

This Hermitian line depends a priori on the cocycle representative $\check{c}$. We can eliminate this dependence as follows, with another inverse limit construction. Recall that $\check{\mathcal{Z}}^{p}(M)$ is the category of differential cocycles of degree $p$ on $M$, defined in Appendix \ref{AppDiffCoc}. Consider the functor $\mathcal{G}_1$ that sends objects $\check{c} \in \check{\mathcal{Z}}^{d+1}(M^{d-1})$ to $\textcal{Int}(\check{c}, M^{d-1}) \in \mathcal{H}_1$, and morphisms $\check{c}'$ to $\textcal{Int}(\check{c}', M^{d-1}) \in \mathbb{C}$. By taking the inverse limit of $\mathcal{G}_1$, we obtain a Hermitian line $\mathcal{W\!\!Z}_c(M^{d-1})$ that depends only on the cohomology class $c$. But there is in fact only a single such cohomology class, the trivial one. $\mathcal{W\!\!Z}_c(M^{d-1})$ is therefore a line canonically associated to $M^{d-1}$. One can trivialize it by picking both a cocycle $\check{c}$ and a chain $\hat{m}$.

\paragraph{$d$-dimensional manifolds with boundary} On a $d$-dimensional manifold with boundary $M^{d,1}$, we can define $\textcal{Int}(\check{c}, M^{d,1})$ by pairing as above $\hat{h}$ with any (relative) chain representative $\hat{p}$ of the fundamental homology class $[M^{d,1},\partial M^{d,1}]$. However, as $\hat{h}$ is in general not a relative cocycle, the value of the pairing depends on $\partial\hat{p}$. Under a change of cycle from $\hat{p}$ to $\hat{p}'$ with $\partial \hat{p} = \hat{m}_1$ and $\partial \hat{p}' = \hat{m}_2$, the value of the pairing gets multiplied by $\exp 2\pi i \langle \hat{n}, \hat{h} \rangle$, where $\hat{n}$ is any chain satisfying $\partial \hat{n} = \hat{m_2} - \hat{m}_1$. We see therefore that $\textcal{Int}(\check{c},M^{d,1})$ is not a well-defined complex number. It is rather an invariant section of the functor $\mathcal{F}_1$ associated to the boundary as defined in the previous paragraph. Therefore $\textcal{Int}(\check{c}, M^{d,1}) \in \textcal{Int}(\check{c}|_{\partial M^{d,1}}, \partial M^{d,1})$.

One can now check that the assignment $\check{c} \mapsto \textcal{Int}(\check{c}, M^{d,1})$ is an invariant section of the functor $\mathcal{G}_1$ associated to the boundary, which we define to be $\mathcal{W\!\!Z}_c(M^{d,1})$. It follows that
\be
\mathcal{W\!\!Z}_c(M^{d,1}) \in \mathcal{W\!\!Z}_c(\partial M^{d,1}) \;.
\ee

\paragraph{Closed $d-2$-dimensional manifolds} We pick again a differential cocycle $\check{c} = (\hat{a}, \hat{h}, 0)$ of degree $d+1$, lifting the (trivial) differential cohomology class $c$. The integral of the real cocycle $\hat{h}$ on a closed manifold of dimension $d-2$ is a 2-Hermitian line, constructed as follows. We consider the 2-category $\mathcal{C}_2$ whose objects are cycle representatives of the fundamental class of $M^{d-2}$. A 1-morphism between the objects $\hat{m}_1$ and $\hat{m}_2$ is a degree $d-1$ chain $\hat{n}$ with $\partial \hat{n} = \hat{m}_2 - \hat{m}_1$. A 2-morphism between two such chains $\hat{n}_1$ and $\hat{n}_2$ is a chain $\hat{p}$ with $\partial \hat{p} = \hat{n}_1 - \hat{n}_2$. (Note that the right-hand side is closed, as $\hat{n}_1$ and $\hat{n}_2$ have the same boundary.) 
A 1- or 2-morphism from a cycle/chain to itself is a cycle of one degree higher, and the units are the zero cycles. Composition is simply the addition of cycles and chains, and is strict. We define a 2-functor $\mathcal{F}_2$ from $\mathcal{C}_2$ into the 2-category $\mathcal{H}_2$ of 2-Hilbert spaces. $\mathcal{F}_2$ sends each object to $\mathcal{H}_1$, seen as an object of $\mathcal{H}_2$. It sends each 1-morphism to $\mathbb{C}$, seen as the trivial functor from $\mathcal{H}_1$ to itself. It sends a 2-morphism $\hat{p}$ to $\exp 2\pi i \langle \hat{p}, \hat{h} \rangle \in \mathbb{T}$, seen as a natural transformation between two copies of the trivial functor. 

We now use the inverse limit construction of Appendix \ref{AppInvLimConstr}. Given two objects $\hat{m}_1$ and $\hat{m}_2$, we construct a Hermitian line $L_{\hat{m}_1,\hat{m}_2}$ as the inverse limit of the restriction of $\mathcal{F}_2$ to the category of morphisms from $\hat{m}_1$ to $\hat{m}_2$. $L_{\hat{m}_1,\hat{m}_2}$ is composed of invariant sections $s$ such that given two 1-morphisms $\hat{n}_1$ and $\hat{n}_2$ as above, $s(\hat{n}_2) = s(\hat{n}_1) \exp 2\pi i \langle \hat{p}, \hat{h} \rangle$. 

Consider invariant sections of the functor $\mathcal{F}_2$ defined as follows. To each object $\hat{m}$ of $\mathcal{C}_2$, we associate a Hermitian line  $L_{\hat{m}}$ such that for any two objects $\hat{m}_1$ and $\hat{m}_2$, $L_{\hat{m}_2} = L_{\hat{m}_1,\hat{m}_2} \otimes L_{\hat{m}_1}$. We define the integral $\mathcal{I}$ of $\hat{h}$ over $M^{d-2}$ to be the inverse limit of $\mathcal{F}_2$. To understand what kind of object this is, remark that the set of invariant sections provides us with something that looks like a Hermitian line, but that cannot be canonically identified with any Hermitian line. By tensoring $\mathcal{I}$ with elements of $\mathcal{H}_1$, we obtain a category $\textcal{Int}(\check{c}, M^{d-2})$ that is non-canonically equivalent to $\mathcal{H}_1$. This is what we defined as a 2-Hermitian line in Appendix \ref{App2Vect}. 

An inverse limit procedure for the functor $\mathcal{G}_2 = \textcal{Int}(\bullet, M^{d-2}): \mathcal{Z}^{d+1}(M^{d-2}) \rightarrow \mathcal{H}_2$ yields a 2-Hermitian line $\mathcal{W\!\!Z}_c(M^{d-2})$ in a way completely analogous to the case of closed manifolds of dimension $d-1$. As the differential cohomology class $c$ is necessarily trivial, $\mathcal{W\!\!Z}_c(M^{d-2})$ is a 2-Hermitian line canonically associated to $M^{d-2}$.

\paragraph{$d-1$-dimensional manifolds with boundary} We can repeat the discussion for a closed $d-1$-dimensional manifold, using a chain representative $\hat{r}$ for the fundamental homology class $[M^{d-1,1},\partial M^{d-1,1}]$. We obtain in this way a Hermitian line $L_{\hat{r}}$. However, $L_{\hat{r}}$ depends on $\partial \hat{r}$. In fact, if $\hat{r}_1$ and $\hat{r}_2$ are two such chains, with $\partial \hat{r}_1 = \hat{m}_1$ and $\partial \hat{r}_2 = \hat{m}_2$, then $L_{\hat{r}_2} = L_{\hat{m}_1,\hat{m}_2} \otimes L_{\hat{r}_1}$. This shows the integral $\textcal{Int}(\check{c}, M^{d-1,1})$ is an object of the 2-Hermitian line $\textcal{Int}(\check{c}|_{\partial M^{d-1,1}}, \partial M^{d-1,1})$.

Again one checks that the map $\check{c} \mapsto \textcal{Int}(\check{c}, M^{d-1,1})$ is an invariant section of the functor $\mathcal{G}_2$ above, which we define to be $\mathcal{W\!\!Z}_c(M^{d-1,1})$. We then have $\mathcal{W\!\!Z}_c(M^{d-1,1}) \in \mathcal{W\!\!Z}_c(\partial M^{d-1,1})$.

\paragraph{$d$-dimensional manifolds with corners} This case can be treated analogously to the case of $d$-dimensional manifolds with boundary, as we are integrating on singular cycles anyway. 

Assume that $M^{d,2}$ is a manifold with corners associated to a 2-morphism in $\mathcal{B}^{d,2}_{\mathsf{F}}$. This means that we have two (possibly empty) codimension 2 corners $M_1$ and $M_2$ which are smooth closed $d-2$-dimensional manifolds, two manifolds $N_1$ and $N_2$ that both have $M_2 \sqcup -M_1$ as boundary, and a bordism $P$ between them. Let $\hat{m}_1$, $\hat{m_2}$, $\hat{n}_1$, $\hat{n}_2$ and $\hat{p}$ be compatible chain or cycle representatives of the relative fundamental classes of these manifolds. $\exp 2\pi i \langle \hat{p}, \hat{h} \rangle$ is a complex number that depends on the chains $\hat{m}_i$ and $\hat{n}_i$. From the way it transforms under a change of $\hat{m}_i$ and $\hat{n}_i$, we see that it defines an invariant section of the functor $\mathcal{F}_2$ above, hence an element $\textcal{Int}(\check{c}, M^{d,2})$ of the object $\textcal{Int}(\check{c}, -N_1 \sqcup N_2)$ of the category $\textcal{Int}(\check{c}, -M_1 \sqcup M_2)$. This is indeed a 2-morphism in $\mathcal{H}_2$ from the 1-morphism $\textcal{Int}(\check{c}, N_1)$ to the 1-morphism $\textcal{Int}(\check{c}, N_2)$.

The second inverse limit works exactly as above and produces a 2-morphism $\mathcal{W\!\!Z}_c(M^{d,2})$ from the 1-morphism $\mathcal{W\!\!Z}_c(N_1)$ to the 1-morphism $\mathcal{W\!\!Z}_c(N_2)$, which depends only on the differential cohomology class $c$.

\paragraph{Compatibility with the dagger operation, monoidal structure and gluing} Recall that we have dagger operations on $\mathcal{B}^{d,2}_{\mathsf{F}}$, corresponding to orientation reversal, and on $\mathcal{H}_2$, corresponding to complex conjugation. The fact that $\mathcal{W\!\!Z}_c$ intertwines between the dagger operations is a direct consequence of the definition of the integral.

The fact that the pairing of chains and cochains is bilinear implies that $\mathcal{W\!\!Z}_c$ is multiplicative under disjoint unions of bordism, showing the compatibility with the monoidal structures of $\mathcal{B}^{d,2}_{\mathsf{F}}$ and $\mathcal{H}_2$.

Finally, the compatibility with the gluing follows directly from the locality of the integration functor.

\subsection{Wess-Zumino terms}

We would like to explain here how Wess-Zumino field theories, when seen as anomaly field theories, can be used to describe certain anomalous building blocks of physical quantum field theories, namely the Wess-Zumino terms. See also \cite{Fiorenza:2013jz} for a long list of examples.

\paragraph{General mechanism} Wess-Zumino terms in the physics literature appear as follows. Suppose that the structure $\mathsf{F}$ allows one to construct a differential cohomology class $c$ refining a characteristic class of degree $d+1$. For instance, if the characteristic class is a combination of Pontryagin classes of the tangent bundle, a Riemannian metric on the underlying manifold $M$, together with the associated Levi-Civita connection, allows one to construct a differential cocycle refining the characteristic class. More generally, if the characteristic class is a Chern-Weil class associated to a vector bundle $\mathcal{V}$ over $M$, a connection on $\mathcal{V}$ determines a differential cohomology class $c$ refining the Chern-Weil class. Typically, the structure $\mathsf{F}$ is such that the automorphism group ${\rm Aut}_{\mathsf{F}}(M)$ of an $\mathsf{F}$-manifold is non-trivial, but leaves the characteristic class invariant. In the two examples above, ${\rm Aut}_{\mathsf{F}}(M)$ corresponds to the subgroup of diffeomorphisms and gauge transformations of the connection on $\mathcal{V}$ that preserve the rest of the structure $\mathsf{F}$.

Once a differential refinement $c$ is available, we have a degree $d+1$ differential form $\omega_c$, the curvature of the differential cohomology class. We define on $d$-dimensional manifolds a Chern-Simons term ${\rm CS}_c$, which has the property that whenever $M^{d} = \partial M^{d+1}$,
\be
\label{EqDefCSBound}
{\rm CS}_c(M^d) = \int_{M^{d+1}} \omega_c \quad {\rm mod} \; 1\;.
\ee
Note that the fact that $\omega_c$ has integral periods ensures that ${\rm CS}_c(M^d)$ is well-defined modulo $1$. The completely general way of defining ${\rm CS}_c(M^{d+1})$ is through the integration map in differential cohomology \cite{hopkins-2005-70}. In the physics literature, ${\rm CS}_c(M^d)$ is often only defined under the assumption that the de Rham cohomology class of $\omega_c$ vanishes, by finding $\eta_c$ such that $d\eta_c = \omega_c$ and defining ${\rm CS}_c(M^d) = \int_{M^{d}} \eta_c$. Alternatively, it is defined on manifolds that bound via \eqref{EqDefCSBound}.

When considered on a manifold $M^{d,1}$ with boundary, ${\rm CS}_c(M^{d,1})$ is in general not invariant under ${\rm Aut}_{\mathsf{F}}(M^{d,1})$. A \emph{Wess-Zumino term} is a local quantity defined on $\partial M^{d,1}$ whose variation under ${\rm Aut}_{\mathsf{F}}(M^{d,1})$ is equal to the variation of ${\rm CS}_c(M^{d,1})$ modulo $1$. Such terms, being local, can be included in the Lagrangian of any classical field theory. Their anomalous variation under the symmetry group can cancel anomalous variations coming from quantum anomalies; this is the celebrated Green-Schwarz mechanism \cite{Green:1984sg}.

We now want to show that when a Wess-Zumino term is added to the action of a $d-1$-dimensional field theory $\mathcal{F}$, its anomaly field theory gets tensored with the corresponding $d$-dimensional Wess-Zumino field theory. To see this, remark that the partition function of the Wess-Zumino field theory is exactly the exponentiated Chern-Simons term $\exp 2 \pi i {\rm CS}_c(M^d)$. The variation of the partition function of the field theory $\mathcal{F}$ under an element of ${\rm Aut}_{\mathsf{F}}(M^{d-1})$ can be computed as follows. Let us pick any $d+1$-dimensional manifold $M^d$ with $\partial M^d = M^{d-1} \sqcup N$, for $N$ some $d-1$-dimensional manifold. We can always find such a pair $(N, M^d)$. Recall that the anomaly field theory $\mathcal{A}$, when evaluated on $M^d$, produces an element of the Hermitian line $\mathcal{A}(M^{d-1}) \otimes \mathcal{A}(N)$. Now consider the closed $d$-dimensional manifold $P_\phi$ obtained by gluing $M^d$ to $-M^d$, where we twist the gluing on $M^{d-1}$ by an automorphism $\phi \in {\rm Aut}_{\mathsf{F}}(M^{d-1})$. $\mathcal{A}(P_\phi)$ computes the inner product of $\mathcal{A}(M^{d})$ with $\phi \mathcal{A}(M^{d})$, i.e. the anomalous phase $\tau_{\phi, M^{d-1}}$ picked under $\phi$ by the partition function $\mathcal{F}(M^{d-1})$. If we add a Wess-Zumino term to the theory $\mathcal{F}$, by definition this phase will be multiplied by $\exp 2\pi i {\rm CS}_c(P_\phi)$. This shows that adding a Wess-Zumino term changes the anomaly field theory from $\mathcal{A}$ to $\mathcal{A} \otimes \mathcal{W\!\!Z}_c$.

\paragraph{Classical chiral WZW model} In the case of the (classical) two-dimensional WZW model, the famous Wess-Zumino term appearing in the action can be understood as follows in the framework above. Let $G$ be a semi-simple simply laced Lie group. The structure $\mathsf{F}$ contains the data of a principal $G$-bundle $\mathscr{E}$ with connection, in addition to an orientation, smooth structure and Riemannian metric. We write $F$ for the curvature of the connection. We have a degree 4 characteristic class, given by a linear combination of $c_2(\mathscr{E})$ and $c^2_1(\mathscr{E})$, whose de Rham cohomology class coincides with the de Rham cohomology class of ${\rm Tr}(F^2)$. The connection on $\mathscr{E}$ provides a differential refinement $c$. Then $\mathcal{W\!\!Z}_c$ is the standard three-dimensional classical Chern-Simons theory based on $G$. 

It is well-known that the variation of the value of the Chern-Simons action on a manifold with boundary under a gauge transformation is computed by the variation of the Wess-Zumino term of the classical WZW model evaluated at the boundary (see for instance Proposition 2 in Section 2.4 of \cite{2013arXiv1311.2490C}). The corresponding abelian bundle gerbes have also been extensively studied in the literature, see for instance \cite{Gawedzki:2002se,2005CMaPh.259..577C}.

\section{Dai-Freed theories}

\label{SecIndFieldTh}

Beyond Wess-Zumino terms, there is a class of anomalous theories of great physical interest, namely the chiral fermionic theories. The corresponding anomaly field theories are Dai-Freed theories, constructed in \cite{Dai:1994kq} as non-extended field theories. Our aim is to extend this construction down to codimension 2. Here we content ourselves with a definition of the corresponding field theory 2-functor $\mathcal{D\!\!F}_\mathsf{F}$, without checking explicitly that it satisfies all the axioms of a 2-functor.

\subsection{Definition} 

Assume that the data $\mathsf{F}(M)$ associated to a manifold $M$ includes a Riemannian metric and a vector bundle $\mathscr{V}$ with connection. The Riemannian metric allows us to turn the cotangent bundle of $M$ into a Clifford bundle, and we assume that $\mathscr{V}$ is a Clifford module endowed with a Clifford connection $\nabla^\mathscr{V}$. This data defines a Dirac operator $D_\mathsf{F}$ on $\mathscr{V}$. In a local coordinate frame, we have $D_\mathsf{F} = c(dx^\mu) \nabla^\mathscr{V}_{\partial_\mu}$, where $c$ is the map from $T^\ast M$ into the corresponding Clifford bundle. In the following, $d$ is odd.

\paragraph{Closed $d$-dimensional manifolds} On $M^d$, we can define the eta invariant $\eta_{\mathsf{F}}$ of $D_\mathsf{F}$. If $h_\mathsf{F}$ is the dimension of the kernel of $D_\mathsf{F}$, we define the modified eta invariant and the corresponding tau invariant by
\be
\xi_\mathsf{F} = \frac{\eta_{\mathsf{F}} + h_\mathsf{F}}{2} \;, \quad \tau_{\mathsf{F}} = \exp 2 \pi i \xi_\mathsf{F} \;.
\ee
Then 
\be
\mathcal{D\!\!F}_\mathsf{F}(M^d) = \tau_{\mathsf{F}} \;.
\ee

\paragraph{Closed $d-1$-dimensional manifolds} On $M^{d-1}$, $D_\mathsf{F}$ decomposes into two chiral Dirac operators $D_{\mathsf{F},+}$ and $D_{\mathsf{F},-}$. We can define a determinant line from the index vector space of $D_{\mathsf{F},+}$:
\be
\label{EqLineIndexFieldTheory}
L_\mathsf{F} := {\rm det}({\rm ker} D_{\mathsf{F},+}) \otimes \left({\rm det}({\rm coker}D_{\mathsf{F},+})\right)^{-1} \;,
\ee
where ${\rm det}$ denotes the top exterior power. We define 
\be
\mathcal{D\!\!F}_\mathsf{F}(M^{d-1}) = L_\mathsf{F} \;.
\ee

\paragraph{$d$-dimensional manifolds with boundary} Let us write $M = M^{d,1}$, $D = D_\mathsf{F}(M^{d,1})$ and $D_\partial = D_\mathsf{F}(\partial M^{d,1})$. Let $K_\partial^\pm(a)$ be the space of smooth positive/negative chirality eigenspinors of $D_\partial$ with eigenvalue smaller than $a > 0$. Let $E_\partial^\pm(a)$ be its complement. The Dirac Laplacian $D_\partial^2$ is invertible on $E_\partial^\pm(a)$. Let $T: K_\partial^+(a) \rightarrow K_\partial^-(a)$ be an isometry and consider the following boundary condition $B_{a,T}$ on a spinor $\psi$ on $M$:
\be
\label{EqBoundCondd1}
B_{a,T} \; : \; \psi_\partial^- + \left( T \oplus \left. \frac{D_\partial}{\sqrt{D_\partial^2}}\right|_{E_\partial^+(a)} \right) \psi_\partial^+ = 0\;, 
\ee
where $\psi_\partial^\pm$ are the positive/negative chirality components of the restriction of $\psi$ to $\partial M$. Dai and Freed \cite{Dai:1994kq} showed that $D$ with the boundary condition $B_{a,T}$ admits a well-defined eta invariant $\eta_{a,T}$. Taking an inverse limit to eliminate the dependence on the boundary condition, they show that the eta invariant becomes an element $\mathcal{D\!\!F}_\mathsf{F}(M^{d,1})$ of the determinant line $\mathcal{D\!\!F}_\mathsf{F}(\partial M^{d,1})$.

They also showed that the gluing relations are satisfied, and therefore that $\mathcal{D\!\!F}_\mathsf{F}: {\rm Bord}^{d,1}_\mathsf{F} \rightarrow \mathcal{H}_1$ is a functor, i.e. a field theory.

\paragraph{Closed $d-2$-dimensional manifolds} The value taken by the field theory on a closed $d-2$-dimensional manifold $M^{d-2}$ is a 2-Hermitian line associated to the index gerbe of the Dirac operator\footnote{We thank Dan Freed for suggesting the relevance of index gerbes in this context.} \cite{Segala, Carey:1997xm, 2002CMaPh.230...41L}. It can be constructed as follows. The data $\mathsf{F}$ determines again a Dirac operator $D_\mathsf{F}$ on $M^{d-2}$. Let $h \in C^\infty_c(\mathbb{R})$ be a smooth real-valued function with compact support. Using the spectral decomposition of $D_\mathsf{F}$, we can make sense of $h(D_\mathsf{F})$. Let us write $B$ for the subset of $C^\infty_c(\mathbb{R})$ consisting of functions such that the operator
\be
\label{EqChoFunch}
D_{\mathsf{F},h} := D_\mathsf{F} - h(D_\mathsf{F})
\ee
is invertible. For $h \in B$, let us write $H^h_>$ ($H^h_<$) for the space of smooth spinor fields generated by the eigenvectors of $D_{\mathsf{F},h}$ with positive (negative) eigenvalues. To any pair $h_1, h_2 \in B$, we can associate a Hermitian line
\be
\label{EqDefLinMorphIF}
L_{(h_1,h_2)} := {\rm det}(H^{h_1}_> \cap H^{h_2}_<) \otimes {\rm det}(H^{h_1}_< \cap H^{h_2}_>)^{-1}
\ee
where we define the determinant of the zero vector space to be $\mathbb{C}$. These lines can be used to construct a gerbe as follows. Consider the category $\mathcal{C}$ whose objects are maps $L: B \rightarrow \mathcal{T}_1$ such that 
\be
\label{EqCondElIF}
L(h_1) = L_{(h_1,h_2)} \otimes L(h_2) \;,
\ee
and whose morphisms are functors $\mathcal{T}_1 \rightarrow \mathcal{T}_1$ that preserve these relations. We are here freely identifying Hermitian lines with $\mathbb{T}$-torsors, see Appendix \ref{AppCircGerbes}. $\mathcal{C}$ is a $\mathcal{T}_1$-torsor. Indeed, remark that if we pick a particular $h$, we can make an arbitrary choice for the value $L(h)$. Once this choice is made, the map $L$ is fully determined by \eqref{EqCondElIF}. This provides a (non-canonical) equivalence between $\mathcal{C}$ and $\mathcal{T}_1$. Moreover, as the map takes value in $\mathcal{T}_1$, the product on $\mathcal{T}_1$ provides a free transitive action of $\mathcal{T}_1$ on $\mathcal{C}$, which is therefore a $\mathcal{T}_1$-torsor. Using \eqref{EqDef2HermLineFromTGerbe}, we define 
\be
\mathcal{D\!\!F}_\mathsf{F}(M^{d-1}) = \mathcal{L}_\mathcal{C} \;,
\ee
the 2-Hermitian line associated to $\mathcal{C}$. 

\paragraph{$d-1$-dimensional manifold with boundary} Consider now a $d-1$-dimensional manifold with boundary $M^{d-1,1}$. We want to check that $\mathcal{D\!\!F}_\mathsf{F}(M^{d-1,1}) \in \mathcal{D\!\!F}_\mathsf{F}(\partial M^{d-1,1})$.
To see this, let us first simplify the notation and write $M := M^{d-1,1}$, $D_+ := D_{\mathsf{F},+}(M^{d-1,1}) $ and $D_\partial := D_{\mathsf{F}}(\partial M^{d-1,1})$. Recall that the Riemannian metric on $M$ is isometric to a product metric on a neighborhood $N$ of the boundary. % $N := [0,\epsilon[ \times \partial M^{d-1,d-2}$.
Writing $t$ for the coordinate normal to the boundary, we have on $N$ $D_+ = c(dt)\nabla^\mathscr{V}_{\partial_t} + D_\partial$. The standard Atiyah-Patodi-Singer (APS) boundary conditions require spinors of positive (negative) chirality on $M$ to restrict to $H^{0}_<$ ($H^{0}_>$) on the boundary, where we use the same notation as in the previous paragraph, with $M^{d-2} = \partial M$. We consider the more general boundary conditions where the positive (negative) chirality spinors are required to restrict in $H^{h}_<$ ($H^{h}_>$), for $h \in B$. These boundary conditions differ from the APS boundary conditions on a finite dimensional subspace and are elliptic as well (see Chapter 18 of \cite{booss1993elliptic}). Then for each choice $h$ of boundary conditions, we obtain a line 
\be
L_\mathsf{F}(M,h) := {\rm det}({\rm ker} D_+) \otimes \left({\rm det}({\rm coker}D_+)\right)^{-1} \;,
\ee
where it is understood that the right-hand side is computed with the boundary condition $h$. We do not want to pick a preferred boundary condition, so we should think of $\mathcal{D\!\!F}_\mathsf{F}(M)$ as a function associating the line $L_\mathsf{F}(M,h)$ to each $h \in B$. We now prove:
\begin{lemma}
$\mathcal{D\!\!F}_\mathsf{F}(M) \in \mathcal{D\!\!F}_\mathsf{F}(\partial M)$
\end{lemma}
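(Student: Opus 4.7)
The plan is to show that the map $h \mapsto L_\mathsf{F}(M,h)$ defines an object of the category $\mathcal{C}$ associated to $\partial M$, which by the definition of $\mathcal{D\!\!F}_\mathsf{F}(\partial M) = \mathcal{L}_\mathcal{C}$ is exactly what is required. Concretely, I must exhibit canonical isomorphisms
\begin{equation}
L_\mathsf{F}(M,h_1) \;\simeq\; L_{(h_1,h_2)} \otimes L_\mathsf{F}(M,h_2)
\end{equation}
for all $h_1, h_2 \in B$, and check that they are compatible with the defining triangle relation of the $\mathcal{T}_1$-torsor $\mathcal{C}$ (i.e.\ that passing through an intermediate $h_3$ gives the same result as going directly from $h_2$ to $h_1$, using the obvious canonical isomorphism $L_{(h_1,h_3)} \simeq L_{(h_1,h_2)} \otimes L_{(h_2,h_3)}$).

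First I would recall that for each $h \in B$, the boundary condition requiring positive chirality spinors on $M$ to restrict to $H^h_<$ (and negative chirality spinors to $H^h_>$) is a generalized Atiyah--Patodi--Singer boundary condition that differs from the standard APS condition ($h = 0$) by a finite-rank modification, so by Chapter 18 of Booss--Bleecker the operator $D_+$ with this boundary condition is Fredholm, with well-defined kernel and cokernel. Next I would invoke the standard relative index formula for APS-type problems under a change of spectral projector: passing from the cut determined by $h_2$ to the cut determined by $h_1$, the modes of $D_\partial$ lying in $H^{h_1}_> \cap H^{h_2}_<$ cease to be admitted as boundary data while those in $H^{h_1}_< \cap H^{h_2}_>$ become newly admitted. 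These two finite-dimensional spaces are precisely the building blocks of $L_{(h_1,h_2)}$, and a direct examination of the relevant four-term exact sequence relating the kernels and cokernels under the two boundary conditions yields the isomorphism displayed above.

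The main obstacle is making this change-of-boundary-condition isomorphism genuinely canonical rather than only defined up to nonzero scalar, and then verifying its compatibility with composition. For this I would rely on the description of the index line as a top exterior power of a two-term complex and use the standard fact that passing a finite-dimensional subspace through such a complex produces a canonical determinant-line isomorphism (this is the same mechanism by which the index gerbe is constructed in \cite{Carey:1997xm, 2002CMaPh.230...41L}). The cocycle/compatibility condition for three choices $h_1, h_2, h_3$ then reduces to the elementary identity for determinants of direct sums of finite-dimensional subspaces of the spectral decomposition of $D_\partial$, which follows by splitting the total spectrum into the finitely many intervals separating the three cuts.

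Once the isomorphism and its cocycle property are in hand, one reads off that $h \mapsto L_\mathsf{F}(M,h)$ satisfies \eqref{EqCondElIF} and hence defines an object of $\mathcal{C}$. Under the identification $\mathcal{L}_\mathcal{C} = \mathcal{D\!\!F}_\mathsf{F}(\partial M)$ from \eqref{EqDef2HermLineFromTGerbe}, this object is precisely an element of the 2-Hermitian line $\mathcal{D\!\!F}_\mathsf{F}(\partial M)$, establishing $\mathcal{D\!\!F}_\mathsf{F}(M) \in \mathcal{D\!\!F}_\mathsf{F}(\partial M)$.
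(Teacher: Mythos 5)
Your proposal follows the same overall route as the paper: reduce the lemma to verifying the torsor relation \eqref{EqCondElIF} for the family $h \mapsto L_\mathsf{F}(M,h)$, and obtain that relation from the way the kernel and cokernel of $D_+$ change when the boundary cut is moved from $h_2$ to $h_1$. The outline is correct, but the one place where all the work sits --- producing an actual isomorphism of determinant lines, not merely the relative index formula (which only controls dimensions) --- is exactly the step you delegate to an unspecified ``four-term exact sequence.'' The paper makes this step concrete via the Cauchy data spaces: it introduces $W_\pm$, the boundary values of smooth solutions of $D_\pm\psi=0$ on $M$, identifies $\ker D_+ \simeq W_+\cap H^h_<$ and $\operatorname{coker} D_+ \simeq W_-\cap H^h_>$ by restriction to the boundary (surjectivity by definition, injectivity by unique continuation, and $W_+\cap W_-=\{0\}$ from invertibility on the double), and then computes the intersections of the virtual space $V=(H^{h_1}_>\cap H^{h_2}_<)\ominus(H^{h_1}_<\cap H^{h_2}_>)$ with $W_\pm$; the decomposition $(V\cap W_+)\oplus(V\cap W_-)=V$, which comes from solvability of the Dirichlet problem for the Dirac Laplacian, is what turns the dimension count into the determinant-line identity $L_{(h_1,h_2)}=L_\mathsf{F}(M,h_2)\otimes L_\mathsf{F}(M,h_1)^{-1}$. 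If you fill in your exact sequence you will in effect be reconstructing this, so I would not call the gap fatal, but as written the central isomorphism is asserted rather than derived. On the other hand, your explicit attention to canonicity and to the cocycle compatibility over three cuts $h_1,h_2,h_3$ is a point of care the paper passes over silently, and your reduction of that compatibility to a spectral-interval decomposition of $D_\partial$ is the right elementary argument for it.
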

\begin{proof}
We must prove that the lines $L_\mathsf{F}(M,h)$, $h \in B$, satisfy \eqref{EqCondElIF}. %The argument that follows generalizes the arguments of \cite{Carey:1995wu} on the cylinder. 
For that, we need to understand the effect of a change of the boundary condition from $h_1$ to $h_2$. Let $W_+$ ($W_-$) be the space of smooth spinors on $\partial M$ extending to $M$ as smooth positive (negative) chirality spinors $\psi$ solving the Dirac equation $D_+ \psi = 0$ ($D_- \psi = 0$). We have $W_+ \cap W_- = \{0\}$, from the invertibility of the Dirac operator on the double (\cite{booss1993elliptic}, Chapter 9). We also have
\be
K_{h,+} := {\rm ker} D_+ \simeq W_+ \cap H^h_< \;, \quad K_{h,-} := {\rm coker}D_+ \simeq W_- \cap H^h_> \;,
\ee
where the isomorphisms are given by restriction to the boundary, which is surjective by definition. The injectivity follows from the unique continuation property of the Dirac operator (\cite{booss1993elliptic}, Chapter 8), which ensures that there are no non-trivial solutions of the Dirac equation that restrict trivially on the boundary. Now consider the virtual vector space
\be
V = (H^{h_1}_> \cap H^{h_2}_<) \ominus (H^{h_1}_< \cap H^{h_2}_>) \;.
\ee
Note that ${\rm det V} = L_{(h_1,h_2)}$. We have
\be
V \cap W_+ \simeq K_{h_2,+}/(K_{h_1,+} \cap K_{h_2,+}) \ominus K_{h_1,+}/(K_{h_1,+} \cap K_{h_2,+})
\ee
\be
V \cap W_- \simeq K_{h_1,-}/(K_{h_1,-} \cap K_{h_2,-}) \ominus K_{h_2,-}/(K_{h_1,-} \cap K_{h_2,-})
\ee
The existence of the solution of the Dirichlet problem for the associated Dirac Laplacian ensures that $(V \cap W_+) \oplus (V \cap W_-) = V$. By taking the direct sum of the previous two equations and then taking the determinant, we get
\be
L_{(h_1,h_2)} = L_\mathsf{F}(M,h_2) \otimes (L_\mathsf{F}(M,h_1))^{-1}
\ee
so \eqref{EqCondElIF} is satisfied.
\end{proof}

\paragraph{$d$-dimensional manifolds with corners} Assume that $M^{d,2}$ has boundary components $N_1$ and $N_2$, and that the latter both have boundary $-M_1 \sqcup M_2$. A choice of function $h$ as in \eqref{EqChoFunch} provides boundary conditions for the Dirac operators on $N_1$ and $N_2$. Once we have elliptic Dirac operators on the boundary, we can pick boundary conditions for the Dirac operator on $M^{d,2}$ as in \eqref{EqBoundCondd1}. The whole boundary condition depends on a triplet $(h,a,T)$. To construct the theory rigorously on $d$-dimensional manifolds with corners, one should show that the Dirac operator on $M^{d,2}$ with these boundary conditions admits a well-defined eta invariant, for instance along the lines of Appendix A of \cite{Dai:1994kq}. Taking an inverse limit to eliminate the dependence on the boundary condition, one should obtain an element $\mathcal{D\!\!F}_\mathsf{F}(M^{d,2})$ of the object $\mathcal{D\!\!F}_\mathsf{F}(-N_1 \sqcup N_2)$ of the category $\mathcal{D\!\!F}_\mathsf{F}(-M_1 \sqcup M_2)$.

\paragraph{Compatibility with the dagger operation, monoidal structure and gluing} We know from \cite{Dai:1994kq} that these compatibility conditions are satisfied on $d$- and $d-1$-dimensional manifolds. On a $d-2$-dimensional manifold $M^{d-2}$, a flip of the orientation multiplies the Dirac operator by $-1$. For consistency, we must also change the function $h$ to $\bar{h}(x) = -h(-x)$. We see then that $L_{(h_1,h_2)}(M^{d-2}) = \left(L_{(\bar{h}_1,\bar{h}_2)}(-M^{d-2})\right)^{-1}$. After taking the inverse limit, we see that $\mathcal{D\!\!F}_\mathsf{F}(-M^{d-2}) = \left( \mathcal{D\!\!F}_\mathsf{F}(M^{d-2}) \right)^\dagger$.

The compatibility with the monoidal structure comes readily from the fact that the spectrum of a Dirac operator on a manifold with several connected components is the union of the spectra of the restrictions to each component. This implies in particular that the 2-Hermitian line associated to the whole manifold is the tensor product of the 2-Hermitian lines associated to the components.

The gluing condition seems more difficult to check and we will not attempt this here.

\subsection{Relation to chiral fermionic theories}

On $d-1$-dimensional manifolds, the structure $\mathsf{F}$ determines a Dirac operator $D$ that decomposes into two chiral Dirac operators $D_+$ and $D_-$. The chiral Dirac operators can be used to construct chiral fermionic field theories that are generally anomalous. It is known that over the moduli space of $\mathsf{F}$-structures, the partition function of a chiral fermionic theory is a section of the determinant line bundle of the chiral Dirac operator \cite{Freed:1986zx}. The state space is an abelian bundle gerbe over the moduli space constructed from the index gerbe \cite{Segala, Carey:1997xm}. Our definition of the cobordism category does not allow us to speak about moduli spaces of cobordisms, but we can restrict the statements above to a single point in the moduli space. We therefore see that the partition function is an element of a determinant line and that the state space is an abelian index gerbe. These facts are naturally explained if the anomaly field theory of the chiral fermionic field theory is the extended Dai-Freed theory constructed above.

\subsection*{Acknowledgments}

I would like to thank Dan Freed for useful correspondence and for pointing out a problem with the definition of the symmetry groups in a previous version of this paper. This research is supported in part by Forschungskredit FK-14-108, SNF Grant No.200020-149150/1 and by NCCR SwissMAP, funded by the Swiss National Science Foundation.

\appendix

\section{Review of some relevant concepts}

\subsection{2-categories and 2-functors}

\label{App2cat2funct}

A useful reference for what follows is \cite{1998math.....10017L}.

\paragraph{2-categories} A \emph{strict 2-category} $\mathcal{C}$ is a category enriched in categories, namely a category such that the collection of morphisms between any two objects is itself a category. In more detail, it consists of the following:
\begin{itemize}
\item A collection $\mathcal{O}_{\mathcal{C}}$ of objects.
\item For each pair $X,Y \in \mathcal{O}_{\mathcal{C}}$, a category $\textcal{Hom}_{\mathcal{C}}(X,Y)$ of morphisms. The objects of $\textcal{Hom}_{\mathcal{C}}(X,Y)$ are called 1-morphisms and the morphisms in the category $\textcal{Hom}_{\mathcal{C}}(X,Y)$ are called 2-morphisms.
\item For each triplets $X,Y,Z \in \mathcal{O}_{\mathcal{C}}$, a composition functor $\textcal{Hom}_{\mathcal{C}}(X,Y) \times \textcal{Hom}_{\mathcal{C}}(Y,Z) \rightarrow \textcal{Hom}_{\mathcal{C}}(X,Z)$.
\item For each $X \in \mathcal{O}_{\mathcal{C}}$ a 1-morphism ${\rm id}_X \in \textcal{Hom}_{\mathcal{C}}(X,X)$ that acts as a unit with respect to the composition.
\item The composition is required to be strictly associative, namely for $W,X,Y,Z \in \mathcal{O}_{\mathcal{C}}$, the two obvious functors mapping $\textcal{Hom}_{\mathcal{C}}(W,X) \times \textcal{Hom}_{\mathcal{C}}(X,Y) \times \textcal{Hom}_{\mathcal{C}}(Y,Z)$ to $\textcal{Hom}_{\mathcal{C}}(W,Z)$ (obtained by composing two composition functors) coincide.
\end{itemize}

In a \emph{weak 2-category}, or a \emph{bicategory}, the unit ${\rm id}_X$ is only required to satisfy the unit axiom up to a 2-isomorphism (i.e. up to an invertible morphism in the appropriate morphism category). Similarly, the composition functors need to satisfy the associativity conditions only up to 2-isomorphisms. The corresponding diagrams can be found in \cite{1998math.....10017L}.

\paragraph{2-functors} Let $\mathcal{C}$ and $\mathcal{D}$ be 2-categories. A \emph{2-functor} $\mathcal{F}$ between $\mathcal{C}$ and $\mathcal{D}$ consists of
\begin{itemize}
\item A map $\mathcal{F}_\mathcal{O}: \mathcal{O}_\mathcal{C} \rightarrow \mathcal{O}_\mathcal{D}$.
\item For each $X,Y \in \mathcal{O}_\mathcal{C}$, a functor $\mathcal{F}_{X,Y}: \textcal{Hom}_{\mathcal{C}}(X,Y) \rightarrow \textcal{Hom}_{\mathcal{D}}(\mathcal{F}_\mathcal{O}(X),\mathcal{F}_\mathcal{O}(Y))$. This functor has to intertwine the composition of morphisms in $\mathcal{C}$ and $\mathcal{D}$, and preserve the units.
\end{itemize}
Depending on which type of 2-categories we are working with, "intertwine" and "preserve" are understood either exactly or up to natural transformations. Again, see \cite{1998math.....10017L} for details.

\paragraph{2-natural transformations} A \emph{2-natural transformation} $\textcal{n}$ between two 2-functors $\mathcal{F}, \mathcal{G}$ between two 2-categories $\mathcal{C},\mathcal{D}$ consists of:
\begin{itemize}
\item For each $X \in \mathcal{O}_\mathcal{C}$, an object $\textcal{n}(X) \in \textcal{Hom}_{\mathcal{D}}(\mathcal{F}(X),\mathcal{G}(X))$.
\item For each $X,Y \in \mathcal{O}_\mathcal{C}$, $f \in \textcal{Hom}_{\mathcal{C}}(X,Y)$, a morphism $\textcal{n}(f)$ of the category $\textcal{Hom}_{\mathcal{D}}(\mathcal{F}(X),\mathcal{G}(Y))$ from $\mathcal{G}(f) \circ \textcal{n}(X)$ to $\textcal{n}(Y) \circ \mathcal{F}(f)$.
\end{itemize}
These morphisms must satisfy relations that are spelled out in \cite{1998math.....10017L}.

\subsection{2-vector spaces and 2-Hilbert spaces}

\label{App2Vect}

\paragraph{2-vector spaces} 2-vector spaces were first defined in \cite{MR1278735} (see also \cite{Yetter1993}), but we follow here the approach of \cite{2008arXiv0812.4969B}, Section 3.2. A category $\mathcal{C}$ is $\mathbb{C}$\emph{-linear} if for each pair of objects $X,Y \in \mathcal{C}$, the collection of morphisms ${\rm Hom}_{\mathcal{C}}(X,Y)$ from $X$ to $Y$ is a finite dimensional complex vector space, and the composition of morphisms is bilinear. A \emph{linear functor} between $\mathbb{C}$-linear categories $\mathcal{C}, \mathcal{D}$ is a functor $\mathcal{F}: \mathcal{C} \rightarrow \mathcal{D}$ such that the induced map $\mathcal{F}_{X,Y}: {\rm Hom}_{\mathcal{C}}(X,Y) \rightarrow {\rm Hom}_{\mathcal{D}}(\mathcal{F}(X),\mathcal{F}(Y))$ is a $\mathbb{C}$-linear map. (A word of warning, what is called a linear functor in \cite{2008arXiv0812.4969B} is what we call a $\mathcal{V}_1$-linear functor, see the definition below.) A \emph{linear equivalence} of $\mathbb{C}$-linear categories is an equivalence of categories whose underlying functors are linear. Finally, define $\mathcal{V}_1^n$ to be the $n$th Cartesian product of the category $\mathcal{V}_1$ of finite vector spaces.

A \textit{finite dimensional 2-vector space} is a $\mathbb{C}$-linear category linearly equivalent to $\mathcal{V}_1^n$ for some $n \in \mathbb{N}$.

To make sense of this definition, one should think of the category $\mathcal{V}_1$ as taking the role that $\mathbb{C}$ is playing for finite-dimensional vector spaces. While $\mathbb{C}$ is a field, we only have a semiring structure on $\mathcal{V}_1$ provided by the direct sum and the tensor product (i.e. $\mathcal{V}_1$ is a symmetric bimonoidal category). $\mathcal{V}_1^n$ is a free module over the category $\mathcal{V}_1$ with $n$ generators. An obvious consequence of the definition above is that any 2-vector space $\mathcal{C}$ can be pictured as $\mathcal{V}_1^n$, albeit non-canonically. We call $n$ the dimension of the 2-vector space. This is the analog of the fact that any complex vector space can be pictured as $\mathbb{C}^n$, generally in a non-canonical way. We can extend the operation of direct sum componentwise to $\mathcal{V}_1^n$. The categorical biproduct provides a monoidal structure on $\mathcal{C}$, which coincides with the componentwise direct sum under the equivalence with $\mathcal{V}_1^n$. This is the analog of the addition operation on vector spaces. Scalar multiplication of an object $O \in \mathcal{V}_1^n$ by $V \in \mathcal{V}_1$ is defined by taking the tensor product of $V$ with each of the components of $O$. This induces a scalar multiplication on $\mathcal{C}$ up to isomorphism. There is also a zero-dimensional 2-vector space equivalent to $\mathcal{V}_1^0$, that has a unique object and morphism. In general there is no notion of tensor product in a 2-vector space $\mathcal{C}$, just like there is no notion of product on a generic vector space.

\paragraph{The 2-category $\mathcal{V}_2$} The collection $\textcal{Cat}$ of all categories can be given the structure of a strict 2-category, whose objects are categories, whose 1-morphisms are functors and whose 2-morphisms are natural transformations. We will construct the 2-category $\mathcal{V}_2$ of 2-vector spaces as a subcategory of $\textcal{Cat}$. 

Given any two 2-vector spaces $\mathcal{C} \sim \mathcal{V}_1^n$ and $\mathcal{D} \sim \mathcal{V}_1^m$, a functor $\mathcal{F}: \mathcal{C} \rightarrow \mathcal{D}$ determines a naturally isomorphic functor $\tilde{\mathcal{F}}: \mathcal{V}_1^n \rightarrow \mathcal{V}_1^m$. A \textit{$\mathcal{V}_1$-linear functor} is a functor $\mathcal{F}$ such that $\tilde{\mathcal{F}}$ is compatible with the $\mathcal{V}_1$-module structures on $\mathcal{V}_1^n$ and $\mathcal{V}_1^m$. $\tilde{\mathcal{F}}$ always takes the form of an $m \times n$ matrix of complex vector spaces, acting on $\mathcal{V}_1^n$ by the usual rules of matrix multiplication \cite{2008arXiv0812.4969B}. The 1-morphisms in $\mathcal{V}_2$ are $\mathcal{V}_1$-linear functors between 2-vector spaces and the 2-morphisms are natural transformations.
 
The higher analogues of the direct sum and tensor product should be a pair of commutative monoidal structures on $\mathcal{V}_2$ satisfying the axioms of a semiring. $\mathcal{V}_2$ is an additive category, and the direct sum is provided by the categorical biproduct. One can check that given two 2-vector spaces $\mathcal{C}$ and $\mathcal{D}$, their biproduct $\mathcal{C} \oplus \mathcal{D}$ is a 2-vector space as well. Indeed, given two linear equivalences $\mathcal{C} \rightarrow \mathcal{V}_1^n$ and $\mathcal{D} \rightarrow \mathcal{V}_1^m$, one can construct a linear equivalence $\mathcal{C} \oplus \mathcal{D} \rightarrow \mathcal{V}_1^{n+m}$.

For a coordinate independent description of the tensor product, we refer the reader to Section 4.4 of \cite{1996q.alg.....9018B}. Given two 2-vector spaces $\mathcal{C}$ and $\mathcal{D}$ endowed with linear equivalences $\mathcal{C} \rightarrow \mathcal{V}_1^n$ and $\mathcal{D} \rightarrow \mathcal{V}_1^m$, $\mathcal{C} \otimes \mathcal{D}$ is linearly equivalent to $\mathcal{V}_1^{nm}$, as the intuition from the tensor product of vector spaces suggests. Given objects in $\mathcal{C}$ and $\mathcal{D}$, pictured as vectors of size $n$ and $m$ of vector spaces, their tensor product in $\mathcal{V}_1^{nm}$ is an $n$ by $m$ matrix of vector spaces whose entries are the (ordinary) tensor products of the vector space components. The tensor product is defined similarly on morphisms between objects. It also extends to 1-morphisms and 2-morphisms in $\mathcal{V}_2$.

\paragraph{2-Hilbert spaces} We now turn to the definition of finite dimensional 2-Hilbert spaces. A comprehensive reference is \cite{1996q.alg.....9018B}.

Let us write $\mathcal{H}_1$ for the category of finite dimensional Hilbert spaces. Following the same logic as above, in the realm of 2-Hilbert spaces, the role $\mathbb{C}$ is playing for finite dimensional Hilbert spaces should be taken over by $\mathcal{H}_1$. The role of the inner product will be played by the hom functor. Recall that given any (locally small) category $\mathcal{C}$ the hom functor is a functor ${\rm Hom}: \mathcal{C}^{\rm op} \times \mathcal{C} \rightarrow \textcal{Set}$ taking a pair of objects $(X,Y)$ to the set of morphisms ${\rm Hom}(X,Y)$ between $X$ and $Y$. Suppose now that $\mathcal{C}$ is a 2-vector space. As $\mathcal{C}$ is $\mathbb{C}$-linear, ${\rm Hom}(X,Y)$ is a complex vector space. In order to ensure that ${\rm Hom}(X,Y)$ is a Hilbert space, we need to restrict ourselves to the 2-vector spaces that are categories enriched in $\mathcal{H}_1$, i.e. such that their vector spaces of morphisms are Hilbert spaces. In addition, we must ensure that the inner product is Hermitian. The analog of the complex conjugation on $\mathbb{C}$ is the complex conjugation of Hilbert spaces in $\mathcal{H}_1$. We therefore need isomorphisms ${\rm Hom}(X,Y) \simeq \overline{{\rm Hom}(Y,X)}$. This happens if $\mathcal{C}$ is an $H^\ast$-category \cite{1996q.alg.....9018B}. Practically, an $H^\ast$-category is a category enriched over $\mathcal{H}_1$ equipped with antilinear maps $\dagger: {\rm Hom}(X,Y) \simeq {\rm Hom}(Y,X)$ satisfying:
\begin{itemize}
\item $f^{\dagger\dagger} = f$
\item $(fg)^\dagger = g^\dagger f^\dagger$
\item $\langle fg, h \rangle = \langle g, f^\dagger h \rangle$
\item $\langle fg, h \rangle = \langle f, h g^\dagger \rangle$
\end{itemize}
for all $f \in {\rm Hom}(X,Y)$, $g \in {\rm Hom}(W,X)$, $h \in {\rm Hom}(W,Y)$, $W,X,Y \in \mathcal{O}_\mathcal{C}$. $\langle \bullet, \bullet \rangle$ denotes here the inner products on the hom Hilbert spaces. Note that the first two axioms are those of a $\dagger$-structure on $\mathcal{C}$.

Therefore, to summarize, a 2-Hilbert space is a 2-vector space that is also an $H^\ast$-category. The inner product is valued in $\mathcal{H}_1$, the category of finite dimensional Hilbert space. It is sesquilinear with respect to the scalar multiplication by elements of $\mathcal{H}_1$ and the complex conjugation of Hilbert spaces. We write $\mathcal{H}_2$ for the 2-category of 2-Hilbert spaces. 

The tensor product on the category of 2-vector spaces passes to $\mathcal{H}_2$. Let $\mathcal{C}_1$ and $\mathcal{C}_2$ be two 2-Hilbert spaces and let $C_1, C_3 \in \mathcal{C}_1$ and $C_2, C_4 \in \mathcal{C}_2$. Then we have the following relation between the inner products:
\be
\langle C_1 \otimes C_2, C_3 \otimes C_4 \rangle_{\mathcal{C}_1 \otimes \mathcal{C}_2} = \langle C_1, C_3 \rangle_{\mathcal{C}_1} \otimes \langle C_2, C_4 \rangle_{\mathcal{C}_2} \;,
\ee
where the tensor product on the right-hand side is the one in $\mathcal{H}_1$. This formula mimics of course the properties of the tensor product in $\mathcal{H}_1$ with respect to the multiplication, one degree higher in the category hierarchy.

It will sometimes be convenient to see $\mathbb{C}$ as a category $\mathcal{H}_0$  that has no morphisms and whose objects are elements of $\mathbb{C}$.

\paragraph{2-Hermitian lines} Let us first discuss Hermitian lines. A Hermitian line $L$ is a Hilbert space isomorphic to $\mathbb{C}$. What makes this concept non-vacuous is that there might not be a canonical isomorphism to $\mathbb{C}$. This is what happens for instance with the determinant line of a Dirac operator (see Section \ref{SecIndFieldTh}). A concrete consequence of the fact that the determinant line is not canonically isomorphic to $\mathbb{C}$ is that when considered in families, the determinant line becomes a possibly non-trivial determinant line bundle. If there was a canonical isomorphism from the determinant line to $\mathbb{C}$, this line bundle would necessarily be trivial.

The definition of 2-Hermitian lines is completely analogous. A 2-Hermitian line $\mathcal{L}$ is a one-dimensional 2-Hilbert space, and therefore a category equivalent to $\mathcal{H}_1$, but not necessarily canonically. We will see a way of constructing 2-Hermitian lines in the next section. The tensor product of two 2-Hermitian lines is again a 2-Hermitian line, as is easily checked by picking an equivalence with $\mathcal{H}_1$ and using the definition of the tensor product given above. Moreover, every 2-Hermitian line $\mathcal{L}$ admits an inverse $\mathcal{L}^{-1}$ such that $\mathcal{L} \otimes \mathcal{L}^{-1}$ is canonically equivalent to $\mathcal{H}_1$.

\subsection{Higher circle groups}

\label{AppCircGerbes}

\paragraph{$\mathbb{T}$-torsors} Consider the unit circle group $\mathbb{T} \subset \mathbb{C}$. Its higher categorical analogues can be described as follows (see \cite{Freed:1994ad}, Section 1). Let us define a \textit{$\mathbb{T}$-torsor} $T$ to be a manifold endowed with a smooth, free and transitive action of $\mathbb{T}$. As the action of $\mathbb{T}$ is free and transitive, $T$ is diffeomorphic to $\mathbb{T}$, but generally not in a canonical way. We write the action of $\mathbb{T}$ on $T$ as $t \cdot \tau$, for $t \in T$ and $\tau \in \mathbb{T}$. Remark that we can canonically associate a $\mathbb{T}$-torsor to a Hermitian line and vice versa. Given a $\mathbb{T}$-torsor $T$, we can define
\be
\label{EqHermLineFromTors}
L_T = \{(t, z) \in T \times \mathbb{C}\}/\{(t \cdot \tau, z) \sim (t, \tau z), \tau \in \mathbb{T} \} \;.
\ee 
Conversely, given a Hermitian line $L$, its unit norm elements form a $\mathbb{T}$-torsor. Because of this, we will often not distinguish explicitly between $\mathbb{T}$-torsor and Hermitian lines.

Write $\mathcal{T}_0 = \mathbb{T}$ and let $\mathcal{T}_1$ be the category of $\mathbb{T}$-torsors, the morphisms being smooth maps intertwining the actions of $\mathbb{T}$. Remark that due to the identification of Hermitian lines with $\mathbb{T}$-torsors, we can see $\mathcal{T}_1$ as a subcategory of $\mathcal{H}_1$, namely the subcategory of Hermitian lines. $\mathbb{T}$ is itself a $\mathbb{T}$-torsor, corresponding to the Hermitian line $\mathbb{C}$. Given two torsors $T_1$, $T_2$, the space of morphisms from $T_1$ to $T_2$ in $\mathcal{T}_1$ is a $\mathbb{T}$-torsor as well. Moreover, any morphism is invertible, making $\mathcal{T}_1$ a groupoid. Just like there is a group structure on $\mathbb{T}$, there is a (weak) 2-group structure on $\mathcal{T}_1$. Given two torsors $T_1$ and $T_2$, we define the multiplication
\be
\label{EqDefProdTorsors}
T_1 \cdot T_2 = \{(t_1, t_2) \in T_1 \times T_2 \} / \{(t_1 \cdot \tau, t_2) \sim (t_1, t_2 \cdot \tau), \tau \in \mathbb{T} \}
\ee
This multiplication is associative up to canonical isomorphism. The identity is $\mathbb{T}$, and $T^{-1}$ is the torsor that coincides with $T$ as a manifold and carries the action $t \cdot_{T^{-1}} \tau = t \cdot_{T} \tau^{-1}$, $t \in T$, $\tau \in \mathbb{T}$. There is a canonical isomorphism between $T \cdot T^{-1}$ and $\mathbb{T}$.

\paragraph{$\mathbb{T}$-gerbes} Moving one step higher in the categorical hierarchy, we define a $\mathbb{T}$-gerbe $\mathcal{G}$ to be a $\mathcal{T}_1$-torsor. By this, we mean that $\mathcal{G}$ is a category endowed with a free and transitive action of $\mathcal{T}_1$. This action is described by a functor $\mathcal{G} \times \mathcal{T}_1 \rightarrow \mathcal{G}$, $(G, T) \mapsto G \cdot T$. We can ensure that it is free and transitive by requiring the existence of an equivalence of categories between $\mathcal{G} \times \mathcal{T}_1$ and $\mathcal{G} \times \mathcal{G}$ mapping $(G, T)$ to $(G, G \cdot T)$. We refer the reader to \cite{Freed:1994ad} for a bit more information about $\mathcal{T}_1$-torsors. We only note here that they form a 2-category $\mathcal{T}_2$.

There is a canonical bijection between $\mathbb{T}$-gerbes and 2-Hermitian lines. Given a $\mathbb{T}$-gerbe $\mathcal{G}$, we can construct the 2-Hermitian line
\be
\label{EqDef2HermLineFromTGerbe}
\mathcal{L}_\mathcal{G} = \{(G, H) \in \mathcal{G} \times \mathcal{H}_1\}/\{(G \cdot T, H) \sim (G, H \otimes L_T), T \in \mathcal{T}_1\}
\ee
(compare with \eqref{EqHermLineFromTors}). Conversely, given a 2-Hermitian line $\mathcal{L}$, consider the subcategory $\mathcal{G}$ formed by all of the objects that mapped to $\mathcal{T}_1 \subset \mathcal{H}_1$ by any (non-canonical) isomorphism $\mathcal{L} \simeq \mathcal{H}_1$. $\mathcal{G}$ is a $\mathbb{T}$-gerbe that is independent of the choice of isomorphism used to define it. Because of this bijection, we will not always distinguish between $\mathbb{T}$-gerbes and 2-Hermitian lines.

\subsection{Geometric bordism 2-categories}

\label{ApGeomBord2Cat}

We sketch here the construction of the geometric bordism 2-categories that are the domains of the functors representing the various anomaly field theories. A detailed treatment of geometric bordism categories can be found in \cite{Ayal}. We adopt an approach in which bordisms are defined abstractly and which yields a strict 2-category. In order to treat families, it would be best to used framed bordisms, i.e. to picture manifolds and bordisms as embedded in $\mathbb{R}^n$ for some large $n$ together with a trivialization of their normal bundle. Indeed, in this case the moduli spaces of bordisms and manifolds come with natural topologies. In this approach one would presumably only obtain a weak 2-category.

\paragraph{Manifolds with structures} We will assume that we have a geometric/topological structure $\mathsf{F}$ that can be put on smooth manifolds of any dimension.  We call manifolds endowed with an $\mathsf{F}$-structure $\mathsf{F}$-manifolds. We list below a series of assumptions that $\mathsf{F}$ should satisfy and that are fulfilled by the concrete examples of such structures met in the main text. The correct formalization is probably the concept of equivariant sheaf of \cite{Ayal}.   

We assume that $\mathsf{F}$ always includes an orientation and a smooth structure. We assume that we have a well-defined notion of germ of $\mathsf{F}$-structure on submanifolds. In the following, a \emph{codimension} $p$ \emph{germ} of $\mathsf{F}$-structure on a manifold $M$ is a germ for the inclusion of $M \times \{0\}  \subset M \times (-\epsilon, \epsilon)^p$ for some $\epsilon \in \mathbb{R}_+$. 

We also assume that we can pull-back (germs of) $\mathsf{F}$-structures along smooth maps of manifolds. We define a morphism of $\mathsf{F}$-manifolds to be a smooth map of the underlying manifolds that preserves the (germs of) $\mathsf{F}$-structure, namely the pulled back (germ of) $\mathsf{F}$-structure should coincide with the (germ of) $\mathsf{F}$-structure on the domain. $\mathsf{F}$-manifolds, possibly endowed with germs of $\mathsf{F}$-structure, then form a category $\mathcal{M}_\mathsf{F}$. We write ${\rm Aut}_\mathsf{F}(M)$ for the automorphism group of $M \in \mathcal{M}_\mathsf{F}$. If $\phi$ is an isomorphism in $\mathcal{M}_\mathsf{F}$ with source $M$, we write $\phi M$ for the target object. 

%We do not identify isomorphic objects in $\mathcal{M}_\mathsf{F}$. We write $\bar{\mathcal{M}}_\mathsf{F}$ for the equivalent category whose objects are isomorphism classes of $\mathsf{F}$-manifolds. Given an $\mathsf{F}$-manifold $M \in \mathcal{M}_\mathsf{F}$, we have an isomorphism class $[M] \in \bar{\mathcal{M}}_\mathsf{F}$. We write ${\rm Aut}_\mathsf{F}([M])$ for the corresponding automorphism group in $\bar{\mathcal{M}}_\mathsf{F}$. ${\rm Aut}_\mathsf{F}([M])$ acts on the collection of $\mathsf{F}$-manifolds isomorphic to $M$ in $\mathcal{M}_\mathsf{F}$. Given $\phi \in {\rm Aut}_\mathsf{F}([M])$, we write $\phi M$ for the $\mathsf{F}$-manifold obtained from $M$ by the action of $\phi$.

\paragraph{Killing automorphisms} As shown Section \ref{SecAnPartFunc}, the existence of automorphisms in $\mathcal{M}_\mathsf{F}$ can sometimes prevent the existence of non-trivial natural transformations representing anomalous field theories. We therefore describe here a way of killing part or all of these automorphism groups. This is achieved by constructing a category whose objects are manifolds endowed with extra structures that are not necessarily preserved by the morphisms.

Suppose that we are given a structure $\mathsf{E}$ refining the structure $\mathsf{F}$, in the sense that an $\mathsf{E}$-structure on a manifold $M$ determines an $\mathsf{F}$-structure on $M$. More precisely, we assume that there is a "forgetful" functor $\textcal{o}\,_{\mathsf{E},\mathsf{F}}: \mathcal{M}_\mathsf{E} \rightarrow \mathcal{M}_\mathsf{F}$ that is surjective on objects. We often write $\mathsf{F}$-manifold between square brackets, and we write $[M]$ for $\textcal{o}\,_{\mathsf{E},\mathsf{F}}(M)$. 

We want now to define a category $\mathcal{M}_{\mathsf{E},\mathsf{F}}$ whose objects are manifolds endowed with $\mathsf{E}$-structure, but whose morphisms preserve only the associated $\mathsf{F}$-structure. For that, we need to know how the morphisms of $\mathsf{F}$-structures act on $\mathsf{E}$-structures. We can encode this data in a function $h_\sigma: \textcal{o}\,_{\mathsf{E},\mathsf{F}}^{-1}([M_2]) \rightarrow \textcal{o}\,_{\mathsf{E},\mathsf{F}}^{-1}([M_1])$ for each morphism $\sigma: [M_1] \rightarrow [M_2]$ in $\mathcal{M}_\mathsf{F}$, such that 
\begin{itemize}
\item $h_{{\rm id}_{[M]}} = {\rm id}_{\textcal{o}\,_{\mathsf{E},\mathsf{F}}^{-1}([M])}$ for any object $[M]$ in $\mathcal{M}_\mathsf{F}$,
\item $h_\sigma \circ h_\rho = h_{\rho \circ \sigma}$ for any pair of composable morphisms $\rho, \sigma$ in $\mathcal{M}_\mathsf{F}$.
\end{itemize}
We can now define $\mathcal{M}_{\mathsf{E},\mathsf{F}}$ as the category whose objects coincide with those of $\mathcal{M}_\mathsf{E}$ and that has a morphism between $M_1, M_2 \in \mathcal{M}_{\mathsf{E},\mathsf{F}}$ for each $\sigma: [M_1] \rightarrow [M_2]$ such that $h_\sigma(M_2) = M_1$. Such a morphism $\tilde{\sigma}$ covers the same diffeomorphism as $\sigma$ and acts on the $\mathsf{E}$-structures so that $\tilde{\sigma}^\ast \mathsf{E}(M_2) = \mathsf{E}(M_1)$. The composition of such morphisms can be defined consistently thanks to the axioms for $h_\sigma$ above. We call the manifolds in $\mathcal{M}_{\mathsf{E},\mathsf{F}}$ $(\mathsf{E},\mathsf{F})$-manifolds. Remark that $h$ defines an action of the group ${\rm Aut}_\mathsf{F}([M])$ on the fiber $\textcal{o}\,_{\mathsf{E},\mathsf{F}}^{-1}([M])$, composed of manifolds with an $\mathsf{E}$-structure refining the $\mathsf{F}$-structure on $[M]$. %Therefore the morphisms in $\mathcal{M}_{\mathsf{F},\mathsf{E}}$, while preserving the underlying $\mathsf{E}$-structure, do not preserve the $\mathsf{F}$-structure.

Let us finally remark each morphism in $\mathcal{M}_{\mathsf{E},\mathsf{F}}$ covers a morphism of $\mathsf{F}$-manifolds, so that we have a forgetful functor from $\mathcal{M}_{\mathsf{E},\mathsf{F}}$ to $\mathcal{M}_{\mathsf{F}}$. We slightly abuse the notation and write it $\textcal{o}\,_{\mathsf{E},\mathsf{F}}$ as well.

This construction can be used to kill the automorphisms of the objects of $\mathcal{M}_\mathsf{F}$ as follows. Let us write ${\rm Aut}_{\mathsf{E},\mathsf{F}}(M)$ for the automorphism group of an object $M \in \mathcal{M}_{\mathsf{E},\mathsf{F}}$. Suppose that $[M] \in \mathcal{M}_\mathsf{F}$ has automorphism group ${\rm Aut}_\mathsf{F}([M])$ and that for each non-trivial $\phi \in {\rm Aut}_\mathsf{F}([M])$, $h_\phi(M) \neq M$. Then ${\rm Aut}_{\mathsf{E},\mathsf{F}}(M)$ is clearly the trivial group.  

We can ensure that all the automorphism groups are killed by taking an $\mathsf{E}$-structure to be an $\mathsf{F}$-structure on $M$ together with an element $\phi \in {\rm Aut}_\mathsf{F}([M])$. For $\phi, \psi \in {\rm Aut}_\mathsf{F}([M])$, we define
\be
h_{\phi}(\mathsf{F}([M]),\psi) = (\phi^\ast \mathsf{F}([M]), \phi \circ \psi) \;.
\ee
As the left action of ${\rm Aut}_\mathsf{F}([M])$ on itself is free, ${\rm Aut}_{\mathsf{E},\mathsf{F}}(M) = 1$ for all $M \in \mathcal{M}_{\mathsf{E},\mathsf{F}}$.

Here is an example to illustrate the constructions above.  Let $\mathsf{F}$ be the data of a conformal structure, in addition to the orientation and the smooth structure. ${\rm Aut}_\mathsf{F}([M])$ are the conformal transformations of $[M]$, i.e. the orientation preserving diffeomorphisms preserving the conformal structure. Take $\mathsf{E}$ to be the data of a Riemannian metric. Then we have a forgetful functor $\textcal{o}\,_{\mathsf{E}, \mathsf{F}}$ that associates its conformal structure to each Riemannian metric. If a Riemannian metric $g$ on $M$ has no isometries, then ${\rm Aut}_{\mathsf{E},\mathsf{F}}(M) = 1$. In the category $\mathcal{M}_{\mathsf{E},\mathsf{F}}$, morphisms correspond to conformal transformations, and do not necessarily perserve the Riemannian metrics carried by the manifolds.

We will see that in order to be able to define an associated bordism category, we cannot require the structure $\mathsf{E}$ to be smooth or even continuous. In the example above, we would allow for any Riemannian metrics, even discontinuous ones. This is of little importance, as the structure $\mathsf{E}$ is used exclusively to kill automorphisms. Of course, the underlying manifolds always carry smooth structures. 

\paragraph{The bordism 2-category} We now define the bordism 2-category $\mathcal{B}^{d,2}_{\mathsf{E},\mathsf{F}}$. The construction below makes sense when $\mathsf{E} = \mathsf{F}$, in which case we write the corresponding category $\mathcal{B}^{d,2}_{\mathsf{F}}$. In what follows, we work in the category $\mathcal{M}_{\mathsf{E},\mathsf{F}}$.% all the manifolds are assumed to be compact $\mathsf{E}$-manifolds, and the morphisms are morphisms in $\mathcal{M}_{\mathsf{E},\mathsf{F}}$. 

The objects of $\mathcal{B}^{d,2}_{\mathsf{E},\mathsf{F}}$ are $d-2$-dimensional closed manifolds $M^{d-2}$ endowed with a codimension 2 germ of $\mathsf{F}$-structure. Note that we do not require germs of $\mathsf{E}$-structure, for reasons explained below.

1-morphisms are of two types. First, we have \textit{regular 1-morphisms} from an object $M_-^{d-2}$ to an object $M^{d-2}_+$ that are triplets $(M^{d-1,1}, \theta^-, \theta^+)$ whose content is as follows. $M^{d-1,1}$ is a $d-1$-dimensional manifold endowed with a codimension 1 germ of $\mathsf{F}$-structure. The boundary $\partial M^{d-1,1} = \partial_- M^{d-1,1} \sqcup \partial_+ M^{d-1,1}$ is partitioned into two disjoint components. $\theta^+: M^{d-2}_+ \rightarrow \partial_+ M^{d-1,1}$ and $\theta^-: -M^{d-2}_- \rightarrow \partial_- M^{d-1,1}$ are isomorphisms. (A minus denotes the orientation flip.) A consequence of our assumptions is that the codimension 2 germ of $\mathsf{F}$-structure on $M^{d-2}_\pm$ coincides with the one obtained by pulling-back through $\theta^\pm$ the restriction of the codimension 1 germ on $M^{d-1,1}$ to $\partial_\pm M^{d-1,1}$.

For each pair $(M^{d-2}, \rho)$ composed of an object $M^{d-2}$ and an isomorphism $\rho$ with source $M^{d-2}$, we also include a 1-morphism from $M^{d-2}$ to $\rho M^{d-2}$. We call such 1-morphisms \emph{limit morphisms}. Limit morphisms can be thought of as limits as $\epsilon$ goes to zero of regular morphisms of the form $(M^{d-2} \times (-\epsilon, \epsilon), {\rm id}_{-M^{d-2}}, \rho)$. 

The composition of 1-morphisms is defined as follows for regular morphisms. Given 1-morphisms $(M^{d-1,1}_0, \theta^-_0, \theta^+_0)$ from $M^{d-2}_{0-}$ to $M^{d-2}_{0+}$ and $(M^{d-1,1}_1, \theta^-_1, \theta^+_1)$ from $M^{d-2}_{1-}$ to $M^{d-2}_{1+}$, with $M^{d-2}_{1-} = M^{d-2}_{0+}$, let $M^{d-1,1}_{01}$ be the gluing $M^{d-1,1}_{0} \sqcup_{M^{d-2}_{0+}} M^{d-1,1}_{1}$ along the maps $\theta^+_0$ and $\theta^-_1$. We define the composition to be $(M^{d-1,1}_{01}, \theta^-_0, \theta^+_1)$.  The composition involving limit morphisms is defined similarly. The composition of morphisms is strictly associative. 

Let us now explain why we do not require germs of $\mathsf{E}$-structure, and why in fact the $\mathsf{E}$-structure cannot even be required to be continuous. As we already emphasized, the morphisms of $\mathcal{M}_{\mathsf{E},\mathsf{F}}$ do not preserve $\mathsf{E}$-structures. Assuming that we would endow our manifolds with germs of $\mathsf{E}$-structures, we would not be able to consistently require these germs to coincide on the boundaries that are glued. For this reason germs cannot be used to guarantee that the gluing of smooth $\mathsf{E}$-structures yields smooth $\mathsf{E}$-structures. This is why we do not require either germs or smoothness for the $\mathsf{E}$-structures. Of course, the germs of $\mathsf{F}$-structures are required to match under gluing and all the $\mathsf{F}$-structures are smooth.

For $\rho = {\rm id}_{M^{d-2}}$, limit morphisms provide the strict identity morphisms required by the axioms of strict 2-categories. Given a manifold with $\mathsf{F}$-structure $[M^{d-2}]$, the limit morphisms between elements of $\textcal{o}_{\mathsf{E},\mathsf{F}}^{-1}([M^{d-2}])$ implement the action of the automorphism group ${\rm Aut}_\mathsf{F}([M^{d-2}])$ into the bordism category. This turns out to be very useful in relating the abstract categorical language to the physical point of view on anomalies as a symmetry breaking phenomenon (see Section \ref{SecAnFieldThe}).

Given two 1-morphisms $(M^{d-1,1}_0, \theta^-_0, \theta^+_0)$ and $(M^{d-1,1}_1, \theta^-_1, \theta^+_1)$ from $M^{d-2}_{-}$ to $M^{d-2}_{+}$, a \emph{regular 2-morphisms} from $M^{d-1,1}_0$ to $M^{d-1,1}_1$ is a pair $(M^{d,2}, \sigma)$, where $M^{d,2}$ is $d$-dimensional manifold with corners, and $\sigma$ 
\be
\label{EqBound2-Morph}
\sigma: \left( -M^{d-1,1}_0 \sqcup_{-M^{d-2}_{-} \sqcup  M^{d-2}_{+}} M^{d-1,1}_1 \right) \rightarrow \partial M^{d,2} \;,
\ee
is a isomorphism. The gluing on the left-hand side is performed using the maps $\theta^\pm_{0,1}$. %$\sigma$ is also required to map compatibly the germ of $\mathsf{F}$-structure on the left-hand-side into the $\mathsf{F}$-structure on $M^{d,2}$. %Note that $M^{d,2}$ restricts to the identity 1-morphism between $\partial M^{d-1,1}_0$ and $\partial M^{d-1,1}_1$. 

Just like for 1-morphisms, we need to add limit 2-morphisms. Let $(M^{d-1,1}, \theta^-, \theta^+)$ be a 1-morphism from $M^{d-2}_{-}$ to $M^{d-2}_{+}$ and let $\tau$ be an isomorphism with source $(M^{d-1,1}, \theta^-, \theta^+)$ restricting to the identity on $\partial M^{d-1,1}$. Then we add a \emph{limit 2-morphism}, written $(M^{d-1,1}, \tau)$ between the source and the target of $\tau$, seen as 1-morphisms $\mathcal{B}^{d,2}_{\mathsf{E},\mathsf{F}}$.

The vertical composition of 2-morphism is defined in the obvious way, through the gluing along the relevant boundary component. The horizontal composition generates new 2-morphisms, which are chains of regular or limit 2-morphisms glued at their corners. The above defines a strict bordism 2-category $\mathcal{B}^{d,2}_{\mathsf{E},\mathsf{F}}$.

\paragraph{Truncation} We will need to consider the truncation of $\mathcal{B}^{d,2}_{\mathsf{E},\mathsf{F}}$ to manifolds and bordisms of dimension $d-1$ or lower, which we write $\mathcal{B}^{d,2}_{\mathsf{E},\mathsf{F}}|_{d-1}$. $\mathcal{B}^{d,2}_{\mathsf{E},\mathsf{F}}|_{d-1}$ is simply the 2-category that has the same objects and 1-morphisms as $\mathcal{B}^{d,2}_{\mathsf{E},\mathsf{F}}$, but whose only 2-morphisms are limit 2-morphisms. Given any 2-functor $\mathcal{F}: \mathcal{B}^{d,2}_{\mathsf{E},\mathsf{F}} \rightarrow \mathcal{C}$, for $\mathcal{C}$ some 2-category, we write $\mathcal{F}|_{d-1}: \mathcal{B}^{d,2}_{\mathsf{E},\mathsf{F}}|_{d-1} \rightarrow \mathcal{C}$ for its restriction to the truncated bordism category.

\paragraph{Pulling-back functors} Recall that we have a forgetful functor $\textcal{o}\,_{\mathsf{E},\mathsf{F}}$ from $\mathcal{M}_{\mathsf{E},\mathsf{F}}$ to $\mathcal{M}_{\mathsf{F}}$. There is an associated forgetful functor $\textcal{o}\,^\mathcal{B}_{\mathsf{E},\mathsf{F}}: \mathcal{B}^{d,2}_{\mathsf{E},\mathsf{F}} \rightarrow \mathcal{B}^{d,2}_{\mathsf{F}}$. Suppose that $\mathcal{F}$ is a functor with domain $\mathcal{B}^{d,2}_{\mathsf{F}}$. Then we can pull it back to a functor $\mathcal{F}' := \mathcal{F} \circ \textcal{o}\,^\mathcal{B}_{\mathsf{E},\mathsf{F}}$ with domain $\mathcal{B}^{d,2}_{\mathsf{E},\mathsf{F}}$. This shows that one can associate canonically a field theory functor with domain $\mathcal{B}^{d,2}_{\mathsf{E},\mathsf{F}}$ to an extended field theory defined on $\mathsf{F}$-manifolds.

\subsection{Differential cocycles}

\label{AppDiffCoc}

Let $M$ be a manifold. Let us write $C^p(M; \mathbbm{K})$ for the space of singular cochains of degree $p$ on $M$ valued in a ring $\mathbb{K}$. Let us also write $\Omega^p(M)$ for the space of real-valued differential forms of degree $p$ on $M$. An (unshifted) differential cochain of degree $p$ over $M$ is an element
\be
\check{A} = (a, h, \omega) \in C^p(M; \mathbbm{Z}) \times C^{p-1}(M; \mathbbm{R}) \times \Omega^p(M) = \check{C}(Y) \;.
\ee
We call $a$ the \emph{characteristic} of $\check{A}$ and $\omega$ its \emph{curvature}. We define a differential by
\be
d\check{A} = (da, \omega - dh - a, d\omega) \;,
\ee
where $d$ on the right-hand side denotes the differential on singular cocycles and differential forms, and we see $\omega$ as a real valued singular cocycle using integration.
The (higher) category $\check{\mathcal{Z}}^p(M)$ of differential cocycles of degree $p$ is defined as follows:
\begin{itemize}
\item Its objects are differential cocycles, i.e. differential cochains on $M$ that are closed with respect to $d$.
\item Its $1$-morphisms are differential cochains of degree $p-1$ with vanishing curvature. If $\check{B}$ is such a cochain, it provides a morphism from $\check{A}$ to $\check{A} + d\check{B}$.
\item Its $k$-morphisms are differential cochains of degree $p-k$ with vanishing curvature. If $\check{C}$ is such a cochain, it provides a $k$-morphism from the $k-1$-morphism $\check{B}$ to the $k-1$-morphism $\check{B} + d\check{C}$.
\end{itemize}
We write $\check{Z}^p(M)$ for the group of objects of $\check{\mathcal{Z}}^p(M)$. The group of isomorphism classes of objects in $\check{\mathcal{Z}}^p(M)$ is the $p$th differential cohomology group $\check{H}^p(M)$ of $M$.

\subsection{Inverse limits}

\label{AppInvLimConstr}

The inverse limit is a useful construction that we are borrowing from \cite{Freed:1994ad}.

Let $\mathcal{G}$ be a groupoid and let $\mathcal{A}$ be a functor from $\mathcal{G}$ to $\mathcal{H}_p$, for $p = 1$ or $2$. By a functor from a category into $\mathcal{H}_2$, we mean an assignment of a 2-Hilbert space to each object of $\mathcal{G}$, as well as a unitary 1-morphism of $\mathcal{H}_2$ to each morphism in $\mathcal{G}$, such that the composition of morphisms in $\mathcal{G}$ is intertwined with the composition of 1-morphisms in $\mathcal{H}_2$. Presumably this construction makes sense for more general targets but this will be sufficient for our purpose. The set of isomorphism classes of objects of $\mathcal{G}$ will be written $\bar{\mathcal{G}}$, and the isomorphism class of $G \in \mathcal{G}$ will be written $[G]$.

A \emph{section} $s$ of $\mathcal{A}$ is an assignment of an element of $\mathcal{A}(G)$ to each object $G$ of $\mathcal{G}$. If $\mathcal{G}$ is finite, sections are elements of   
\be
S := \bigoplus_{G \in \mathcal{G}} \mathcal{A}(G) \;,
\ee
where $\bigoplus$ is the usual direct sum for $p = 1$, and the additive structure on $\mathcal{H}_2$ described in Section \ref{App2Vect} for $p = 2$. We will be forced to work with groupoids that have an infinite number of elements, although the number of isomorphism classes will always be finite. In this case, the space of sections cannot be naturally pictured as an object in $\mathcal{H}_p$, as all our $p$-Hilbert spaces are finite-dimensional. Fortunately, this has no consequence on the following. Let us write $s(G) \in \mathcal{A}(G)$, for the value of the section at $G$. An \emph{invariant section} $s$ is a section satisfying the relation $s(G') = \mathcal{A}(\phi) s(G)$ for each morphism $\phi: G \rightarrow G'$ in $\mathcal{G}$. 

We define an inner product between invariant sections by
\be
\label{EqDefInProdInvSec}
(s,s') = \sum_{[G] \in \bar{\mathcal{G}}} %\mu_{[G]} \cdot
 (s(G),s'(G))_{\mathcal{A}(G)} \;,
\ee
where the sum is taken over the set $\bar{\mathcal{G}}$ of isomorphism classes of objects of $\mathcal{G}$ and we used the inner product of $\mathcal{G}(G)$ on the right-hand side. As $\mathcal{A}(\phi)$ is unitary and the sections are invariant, this definition does not depend on the choice of representatives $G$ of the isomorphism classes $[G]$. The sum in \eqref{EqDefInProdInvSec} is an ordinary sum for $p = 1$, but is a direct sum for $p = 2$. 

The collection $I_\mathcal{A}$ of invariant sections of $\mathcal{A}$ is therefore an object of $\mathcal{H}_p$, the \emph{inverse limit} of $\mathcal{A}$. $I_\mathcal{A}$ decomposes into a direct sum
\be
I_\mathcal{A} = \bigoplus_{[G] \in \bar{\mathcal{G}}} I_\mathcal{A}([G])
\ee
where $I_\mathcal{A}([G]) \in \mathcal{H}_p$ is isomorphic to $\mathcal{A}(G)$. The inverse limit construction should therefore be seen as a way of assigning an object of $\mathcal{H}_p$ to each isomorphism class of objects of $\mathcal{G}$.

Note that in the case where $\mathcal{G}$ has a single object $G$, and therefore corresponds to a group $\Gamma$, $\mathcal{A}$ is an action of $\Gamma$ on a $p$-Hilbert space, and the inverse limit is given by the space of invariants of the action.

{
\small
%\bibliographystyle{../../Bibliographie/BibStyle/utphys}
%\bibliography{../../Bibliographie/References}

\providecommand{\href}[2]{#2}\begingroup\raggedright\endgroup

}

\end{document}